%
\documentclass[a4paper,USenglish]{article}
\usepackage{makeidx}  
\usepackage{algorithm}
\usepackage{algorithmicx}
\usepackage[noend]{algpseudocode}
\usepackage{amsmath,amsfonts,mathrsfs,amsthm,amssymb}
\usepackage{xspace}
\usepackage{graphicx}
\usepackage{epsfig}
\usepackage{hyperref}
\usepackage[margin=1.5in]{geometry}


\newtheorem{theorem}{Theorem}
\newtheorem{lemma}[theorem]{Lemma}
\newtheorem{claim}[theorem]{Claim}

\newtheorem{corollary}[theorem]{Corollary}
\newtheorem{definition}[theorem]{Definition}

\newcommand{\sol}{\ensuremath{\mathsf{\tiny M'}}\xspace}
\newcommand{\solI}{\ensuremath{\mathsf{\tiny M}}\xspace}
\newcommand{\solb}{\ensuremath{\mathsf{\tiny B}}\xspace}
\newcommand{\sola}{\ensuremath{\mathsf{\tiny A}}\xspace}

\newcommand{\ratio}{\frac{25}{3}}
\newcommand{\iratio}{\frac{3}{25}}









\newcommand{\sse}{\subseteq}



\newcommand{\OPT}{\mbox{\sc OPT}}
\newcommand{\lp}{{\small \textsf{LP}}\xspace}






\algnewcommand\algorithmicinput{\textbf{Input:}}
\algnewcommand\INPUT{\item[\algorithmicinput]}
\algnewcommand\algorithmicoutput{\textbf{Output:}}
\algnewcommand\OUTPUT{\item[\algorithmicoutput]}

\newcommand{\there}{{\texttt{here}}\xspace}
\newcommand{\tup}{{\texttt{up}}\xspace}
\newcommand{\tleft}{{\texttt{left}}\xspace}
\newcommand{\tright}{{\texttt{right}}\xspace}

\newcommand{\bm}{{\textsc{$b$-Matching}}\xspace}
\newcommand{\dm}{{\textsc{Demand Matching}}\xspace}
\newcommand{\dmm}{{\textsc{Matroidal Demand Matching}}\xspace}
\newcommand{\gdm}{{\textsc{Generalized Demand Matching}}\xspace}
\newcommand{\cp}{{\textsc{Coupled Placement}}\xspace}

\newcommand{\gdms}{\ensuremath{\mathsf{GDM}}\xspace}
\newcommand{\dms}{\ensuremath{\mathsf{DM}}\xspace}
\newcommand{\gdmm}{\ensuremath{\mathsf{GDM_M}}\xspace}

\newcommand{\cM}{{\mathcal M}\xspace}


%
%
%
%
\title{Further Approximations for Demand Matching: Matroid Constraints and Minor-Closed Graphs}

\author{
Sara Ahmadian \thanks{Department of Combinatorics and Optimization, University of Waterloo.}
\and
Zachary Friggstad \thanks{Department of Computing Science, University of Alberta. This research was undertaken, in part, thanks to funding from the Canada Research Chairs program and an NSERC Discovery Grant.}
}

\begin{document}
\maketitle

\begin{abstract}
We pursue a study of the \gdm problem, a common generalization of the \bm and \textsc{Knapsack} problems. Here, we are given a graph with vertex capacities, edge profits, and asymmetric demands on the edges.
The goal is to find a maximum-profit subset of edges so the demands of chosen edges do not violate the vertex capacities.
This problem is {\bf APX}-hard and constant-factor approximations are already known.

Our main results fall into two categories. First, using iterated relaxation and various filtering strategies,
we show with an efficient rounding algorithm that if an additional matroid structure $\cM$ is given and we further only allow sets $F \subseteq E$ that are independent in $\cM$, the natural LP relaxation
has an integrality gap of at most $\ratio \approx 8.333$. This can be further improved in various special cases, for example we improve over the 15-approximation for the previously-studied \cp problem [Korupolu et al. 2014] by giving a $7$-approximation.

Using similar techniques, we show the problem of computing a minimum-cost base in $\mathcal M$ satisfying vertex capacities admits a $(1,3)$-bicriteria approximation: the cost is at most the optimum and the
capacities are violated by a factor of at most 3. This improves over the previous $(1,4)$-approximation in the
special case that $\cM$ is the graphic matroid over the given graph [Fukanaga and Nagamochi, 2009].

Second, we show \dm admits a polynomial-time approximation scheme in graphs that exclude a fixed minor.
If all demands are polynomially-bounded integers, 
this is somewhat easy using dynamic programming in bounded-treewidth graphs.
Our main technical contribution is a sparsification lemma that allows us to scale the demands of some items to be used in a more intricate dynamic programming
algorithm, followed by some randomized rounding to filter our scaled-demand solution to one whose original demands satisfy all constraints.
\end{abstract}



\section{Introduction} \label{sec:intro}

Many difficult combinatorial optimization problems involve resource allocation. Typically, we have a collection of resources, each with finite supply or {\bf capacity}. Additionally there are tasks to be accomplished,
each with certain requirements or {\bf demands} for various resources. Frequently the goal is to select a maximum value set of tasks and allocate the required amount of resources to each task while
ensuring we have enough resources to accomplish the chosen tasks. This is a very well-studied paradigm: classic problems include \textsc{Knapsack}, \textsc{Maximum Matching}, and \textsc{Maximum Independent Set},
and more recently-studied problems include \textsc{Unsplittable Flow} \cite{AGLW12} and \textsc{Coupled Placement} \cite{KMRT14}. In general, we cannot hope to get non-trivial approximation algorithms for these problems.
Even the simple setting of \textsc{Maximum Independent Set} is inapproximable \cite{H99,Z07}, so research frequently focuses on well-structured special cases.

Our primary focus is when each task requires at most two different resources. Formally, in \gdm (\gdms)
we are given a graph $G = (V,E)$ with, perhaps, parallel edges. The vertices should be thought of as resources and the tasks as edges.
Each $v \in V$ has a capacity $b_v \geq 0$ and each $uv \in E$ has demands $d_{u,e}, d_{v,e} \geq 0$ and a value $p_{uv} \geq 0$. A subset $M \subseteq E$ is {\em feasible} if $d_v(\delta(v) \cap M) \leq b_v$
for each $v \in V$ (we use $d_v(S)$ as shorthand for $\sum_{e \in S} d_{v, e}$ when $S \subseteq \delta(v)$).
We note that the simpler term \dm (\dms) is used when $d_{u,e} = d_{v,e}$ for each edge $e = uv$ (e.g. \cite{SV07,SW16}).

\dms is well-studied from the perspective of approximation algorithms. It is fairly easy to get constant-factor approximations and some work has been done refining these constants.
Moreover the integrality gap of a natural LP relaxation is also known to be no worse than a constant
(see the related work section). On the other hand, \dms is {\bf APX}-hard \cite{SV07}.

Our main results come in two flavours. First, we look to a generalization we call \dmm (\gdmm). Here, we are given the same input as in \gdms but there is also a matroid $\cM = (E,\mathcal I)$ over the edges $E$ with independence
system $\mathcal I \subseteq 2^E$
that further restricts feasibility of a solution. A set $F \subseteq E$ is feasible if it is feasible as a solution to the underlying \gdms problem and also $F \in \mathcal I$.
We assume $\cM$ is given by an efficient independence oracle. Our algorithms will run in time that is polynomial in the size of $G$ and the maximum running time of the independence oracle.

As a special case, \gdmm includes the previously-studied \textsc{Coupled Placement} problem. In \textsc{Coupled Placement}, we are given a bipartite graph $G = (V,E)$ with vertex capacities. The tasks
are not individual edges, rather for each task $j$ and each $e=uv \in E$ we have demands $d^j_{u,e}, d^j_{v,e}$ placed on the respective endpoints $u,v$ for placing $j$ on edge $e$.
Finally, each task $j$ has a profit $p_j$ and the goal is to select a maximum-profit subset of tasks $j$
and, for each chosen task $j$, assign $j$ to an edge of $G$ so vertex capacities are not violated. We note that an edge may receive many different tasks. This can be viewed as an instance of \gdmm by creating parallel copies
of each edge $e \in E$, one for each task $j$ with corresponding demand values and profit for $j$ and letting $\cM$ be the partition matroid ensuring we take at most one edge corresponding to any task.

For another interesting case, consider an instance where, in addition to tasks requiring resources from a shared pool, each also needs to be connected to a nearby power outlet.
We can model such an instance by letting $\mathcal M$ be a transversal matroid over a bipartite graph where tasks form one side, outlets form the other side, and an edge indicates the edge can reach the outlet.

In fact \gdmm can be viewed as a packing problem with a particular submodular objective function. These are studied in \cite{BKNS12} so the problem is not new; our results are improved approximations.
Our techniques also apply to give bicriteria approximations for the variant of \gdmm where we must pack a cheap base of the matroid while obeying congestion bounds. In the special case where $\cM$
is the graphic matroid over $G$ itself (i.e. the \textsc{Minimum Bounded-Congestion Spanning Tree} problem), we get an improved bicriteria approximation.

Second, we study \gdms in special graph classes. In particular, we demonstrate a PTAS in families of graphs that exclude a fixed minor.
This is complemented by showing that even \dms is strongly NP-hard in simple planar graphs, thereby ruling out a fully-polynomial time approximation scheme (FPTAS) in simple planar graphs unless {\bf P} = {\bf NP}.


\subsection{Statements of Results and Techniques}\label{sec:results}

We first establish some notation. For a matroid $\mathcal M = (E, \mathcal I)$, we let $r_\cM : 2^E \rightarrow \mathbb Z_{\geq 0}$ be the rank function for $\mathcal M$. We omit the subscript $\cM$
if the matroid is clear from the context. For $v \in V$
we let $\delta(v)$ be all edges having $v$ as one endpoint; for $F \subseteq E$ we let $\delta_F(v)$ denote $\delta(v) \cap F$. For a vector of values $x$ indexed by a set $S$,
we let $x(A) = \sum_{i \in A} x_i$ for any $A \subseteq S$. A {\bf polynomial-time approximation scheme} (PTAS) is an approximation algorithm that accepts an additional parameter $\epsilon > 0$.
It finds a $(1+\epsilon)$-approximation in time $O(n^{f(\epsilon)})$ for some function $f$ (where $n$ is the size of the input apart from $\epsilon$),
so the running time is polynomial for any constant $\epsilon > 0$. An FPTAS is a PTAS with running time being polynomial in $\frac{1}{\epsilon}$ and $n$.

We say an instance of \gdms has a {\bf consistent ordering of edges} if $E$ can be ordered such that the restriction of this ordering to each set $\delta(v)$ has
these edges $e \in E$ appear in nondecreasing order of demands $d_{v,e}$. For example, \dms itself has a consistent ordering of demands, just sort edges by their demand values.
This more general case was studied in \cite{O11}. We say the instance is {\bf conflict-free} if for any $e,f \in E$ we have that $\{e,f\}$ does not violate the capacity of any vertex.

In the first half of our paper, we mostly study the following linear-programming relaxation of \gdmm. Here, $r : 2^E \rightarrow \mathbb Z$ is the rank function for $\mathcal M$.
\begin{equation}\tag{{\bf LP-M}} \label{lp-m}
{\rm max}: \left\{\sum_{e \in E} p_e x_e : \sum_{e \in \delta(v)} d_{v,e} x_e \leq b_v~ \forall v \in V, ~~ x(A) \leq r(A)~~ \forall A \subseteq E,~~ x \geq 0 \right\}
\end{equation}
Note $x(\{e\}) \leq 1$ is enforced for each $e \in E$ as $r(\{e\}) \leq 1$.
It is well-known that the constraints can be separated in polynomial time when given an efficient independence oracle for $\mathcal M$,
so we can find an extreme point optimum solution to \eqref{lp-m} in polynomial time.


Throughout, we assume each edge is feasible by itself. This is without loss of generality: an edge that is infeasible by itself can be discarded\footnote{This is a standard step when studying packing LPs, even the natural
\textsc{Knapsack} LP relaxation has an unbounded integrality gap if some items do not fit by themselves.}.
We first prove the following.
\begin{theorem}\label{thm:main}
Let $\OPT\eqref{lp-m}$ denote the optimum solution value of \eqref{lp-m}. If $d_{v,e} \leq b_v$ for each $v \in V, e \in \delta(v)$ then we can find, in polynomial time,
a feasible solution $\solI \subseteq E$ such that $\OPT\eqref{lp-m} / p(\solI)$ (and, thus, the integrality gap) is at most:
\begin{itemize}
\item $\ratio$ in general graphs
\item $7$ in bipartite graphs
\item $5$ if the instance has a consistent ordering of edges
\item $4$ if the instance is conflict-free
\item $1+O(\epsilon^{1/3})$ if $d_{v,e} \leq \epsilon\cdot b_v$ for each $v \in V, e \in \delta(v)$ (i.e. edges are {\bf $\epsilon$-small})
\end{itemize}
\end{theorem}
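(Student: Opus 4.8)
The plan is to solve \eqref{lp-m} for an extreme point optimum $x^*$, and then exploit the sparsity of $x^*$ together with iterated relaxation to round it. Since $x^*$ is a vertex of a polytope defined by the $|V|$ vertex-capacity constraints, the exponentially many rank constraints (which form a polymatroid), and the nonnegativity constraints, a counting argument shows that the number of edges $e$ with $0 < x^*_e < 1$ is tightly controlled: if we let $F$ be the support of the fractional part, then the tight rank constraints that are linearly independent form a chain (by uncrossing), and the tight degree constraints number at most $|V|$, so $|F| \le |V| + (\text{number of tight independent rank constraints})$. I would first run an iterated relaxation loop: whenever some $x^*_e = 1$, put $e$ into the solution, contract it in $\cM$, decrement the capacities of its endpoints, and recurse; whenever some $x^*_e = 0$, delete $e$. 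This way we may assume $0 < x^*_e < 1$ for all $e \in E$, and then the counting bound gives that $G$ together with the chain of tight rank sets is "sparse" — roughly, $G$ has average degree a little above $2$, so many vertices have degree $1$ or $2$ in the support.

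Next comes the filtering. The idea is to partition or classify the fractional edges according to how large their demand is relative to the capacity of each endpoint: call $e$ \emph{big at $v$} if $d_{v,e} > \tfrac12 b_v$ (say) and \emph{small at $v$} otherwise; a vertex sees at most one big edge among any feasible pair, so the big edges define a sub(multi)graph of max degree controlled by the conflict structure. I would then do a randomized-rounding-plus-alteration argument: scale $x^*$ down by a constant factor $\beta$ and include each remaining fractional edge independently with probability $\beta x^*_e$; for each vertex $v$, the expected demand loaded on $v$ is at most $\beta b_v$, and by a Markov/union-bound argument the probability that $v$'s capacity is violated (using $d_{v,e}\le b_v$, or $d_{v,e}\le\epsilon b_v$ in the last case) is bounded away from $1$; when it is violated we discard the offending edges at $v$. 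To handle the matroid constraint one rounds against the matroid polytope (which has integral vertices) rather than independently, or alternatively one uses the sparsity to argue the leftover fractional part after handling degree-$1$ and degree-$2$ vertices is itself matroid-independent. Choosing $\beta$ optimally, and being more careful in the structured cases — bipartite graphs admit a $2$-coloring of vertices so that alternately one side is "protected"; a consistent ordering lets one always drop the largest-demand fractional edge at a violated vertex, which is the cheapest way to restore feasibility; conflict-free means any two fractional edges at $v$ already fit, so essentially no loss — yields the five claimed ratios. For the $\epsilon$-small case, the key point is that with $\epsilon$-small edges a Chernoff bound (rather than Markov) makes the violation probability $O(\epsilon^{1/3})$ after scaling by $1 - \Theta(\epsilon^{1/3})$, and the $\epsilon^{1/3}$ rather than $\epsilon$ arises from balancing the scaling loss against the tail-bound loss across the $O(1/\epsilon)$-scale demand classes.

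I expect the main obstacle to be reconciling the matroid constraint with the demand constraints during rounding. Independent randomized rounding is natural for packing constraints but does not respect a matroid; rounding inside the matroid polytope (via, e.g., swap rounding or pipage) respects $\cM$ but correlates the edge choices in a way that complicates the per-vertex capacity analysis. The resolution I would pursue is the iterated-relaxation structure itself: after repeatedly removing integral variables and uncrossing tight rank constraints, the residual instance is so sparse (a chain of tight rank sets plus few tight degree constraints) that one can either (i) drop one carefully chosen fractional edge per tight rank set and per overloaded vertex — a direct combinatorial charging argument, no randomness — or (ii) argue the residual fractional solution already lies in the base polytope of a contracted matroid of tiny rank, so a single rounding step suffices. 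Getting the constants exactly to $\ratio$, $7$, $5$, $4$ will require bookkeeping the worst case across "big" versus "small" edges and across the chain length, and that accounting, rather than any single clever trick, is where the real work lies.
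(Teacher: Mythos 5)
Your iterated relaxation only fixes variables at $0$ or $1$; it never drops a constraint. That cannot work on its own: once all remaining variables are strictly fractional, your loop is stuck with nothing to fix, and the residual LP optimum may still be fractional. The paper's algorithm has a crucial third branch that you are missing: when no variable is integral, it \emph{drops a vertex capacity constraint} for a carefully chosen vertex. The key structural lemma (Claim~\ref{clm:atmosttwo}) is the counting argument you gesture at, but used to a different end: it shows that if every remaining $x^*_e$ is strictly fractional, then some $v \in W$ satisfies $|\delta_F(v)| - x^*(\delta_F(v)) \le 2$, so dropping the constraint for that $v$ can only cause the final integral set to overshoot $b_v$ by the two largest remaining demands across $v$. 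Iterating this to completion yields $\sol \in \cI$ with $p(\sol) \ge \OPT\eqref{lp-m}$ and $d_v(\delta_\sol(v) - L(v)) \le b_v$ for all $v$, where $L(v)$ is the top-two edges at $v$ (Lemma~\ref{lem:3bv}). This is the object everything else hinges on; without the constraint-dropping branch you do not obtain it.

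This also dissolves the obstacle you identify about reconciling the matroid with the demands during rounding. In the paper there is no randomized rounding against the LP at all. The iterated relaxation already delivers an \emph{integral} set $\sol$ that is independent in $\cM$; the pruning phase only \emph{removes} edges from $\sol$, and any subset of an independent set is independent, so matroid feasibility is preserved for free. The independent-rounding-with-alteration scheme you sketch (scale by $\beta$, include with probability $\beta x^*_e$, discard overflows) is roughly what the paper does only in the $\epsilon$-small case (Lemma~\ref{lem:dmm_small}), where it is applied to $\sol$ rather than to $x^*$; for the main ratios it is replaced by combinatorial decompositions of $\sol$. Concretely: for general graphs, vertices are randomly labeled \texttt{s}/\texttt{l}, only edges agreeing with the labeling on both endpoints are kept, the \texttt{s}-vertices are ``shattered,'' and the resulting max-degree-$2$ graph is decomposed into matchings ($\alpha = 2/5$ gives $\iratio$); for bipartite graphs one partitions $\sol$ deterministically into $7$ feasible sets; for consistent orderings into $5$; for conflict-free instances the agreeing-edges set is already feasible with $\alpha = 1/2$. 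Your big-versus-small ($d_{v,e} > \tfrac12 b_v$) classification and Markov-based alteration will not recover these specific constants; the paper's ratios come from the $L(v)$ versus $S(v) = \delta_\sol(v) - L(v)$ split, which is a property certified by the iterated relaxation, not a property of edge sizes.

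So the proposal has a genuine gap (no constraint dropping, hence no Lemma~\ref{lem:3bv}-type guarantee) and the filtering route you take, while plausible as a generic packing technique, is not the one that yields the claimed ratios and does not obviously handle the matroid without the integral $\sol \in \cI$ supplied by the first phase.
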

These bounds also apply to graphs with parallel edges, so we get a $7$-approximation for \textsc{Coupled Placement},
which beats the previously-stated $15$-approximation in \cite{KMRT14}.

We prove all bounds in Theorem \ref{thm:main} using the same framework: iterated relaxation to find some $\sol \in \mathcal I$ with $p(\sol) \geq OPT_{LP}$ that may violate some capacities by a
controlled amount, followed by various strategies to pare the solution down to a feasible solution.
We note constant-factor approximations for \gdmm were already implicit in \cite{BKNS12}, the bounds in Theorem \ref{thm:main} improve over their bounds
and are relative to \eqref{lp-m} whereas \cite{BKNS12} involves multilinear extensions of submodular functions.


Our techniques can also be used to address a variant of \gdmm. The input is the same, except we are required to select a base of $\mathcal M$. The goal is to find a minimum-value base
satisfying the vertex capacities.
More formally, let \textsc{Minimum Bounded-Congestion Matroid Basis} be given the same way as in \gdmm, except the goal is to find a minimum-cost base \solb of $\cM$
satisfying the vertex capacities (i.e. the cheapest base that is a solution to the \gdmm problem).

When all demands are 1, this is the \textsc{Minimum Bounded-Degree Matroid Basis} problem which, itself, contains the famous \textsc{Minimum Bounded-Degree Spanning Tree} problem.
As an important special case, we let \textsc{Minimum Bounded-Congestion Spanning Tree} denote the problem when $k=2$ with arbitrary demands where $\cM$ is the graphic matroid over $G$.
Even determining if there is a feasible solution is {\bf NP}-hard, so we settle with approximations that may violate the capacities a bit.
Consider the following LP relaxation, which we write when $G$ can even be a hypergraph.
\begin{equation}\tag{{\bf LP-B}} \label{lp-b}
{\rm min}: \left\{\sum_{e \in E} p_e x_e : \sum_{e \in \delta(v)} d_{v,e} x_e \leq b_v~ \forall v \in V, ~~x(A) \leq r(A)~ \forall A \subseteq E,~~ x(E) = r(E), ~~ x \geq 0 \right\}
\end{equation}
As a side effect of how we prove Theorem \ref{thm:main}, we also prove the following.
\begin{corollary}\label{cor:congestion}
If $G$ is a hypergraph where each edge has size at most $k$, then in polynomial time we can either determine there is no integral point in \eqref{lp-b}
or we can find a base $\solb$ of $\mathcal M$ such that $p(\solb) \leq \OPT\eqref{lp-b}$
and $d_v(\delta_\solb(v)) \leq b_u + k \cdot \max_{e \in \delta(v)} d_{v,e}$ for each $v \in V$.
\end{corollary}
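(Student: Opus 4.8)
The plan is to use the standard iterated relaxation / iterative rounding machinery for matroid-basis LPs, the same engine that underlies Theorem~\ref{thm:main}, specialized here to the covering variant \eqref{lp-b}. First I would take an extreme point $x$ of \eqref{lp-b}. If $x$ is integral we are done (it certifies the first alternative is false, and is itself a feasible base with cost at most $\OPT\eqref{lp-b}$). Otherwise, run the usual iterative argument: in each round, discard any vertex capacity constraint that is ``loose enough'' and argue that if no such constraint can be dropped, then some variable can be fixed to $0$ or $1$. Concretely, a variable with $x_e = 0$ is deleted from the ground set and a variable with $x_e = 1$ is contracted in $\cM$ (it must belong to every base extending the current partial solution); the residual problem is again of the same form with a matroid rank function, so the recursion is well-defined, and the cost never increases because we only ever fix variables to their current value or drop constraints.

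The heart of the argument is the token/counting step showing that in an extreme point, after we have removed all ``droppable'' capacity constraints, we must have an integral coordinate. Here is where the hypergraph bound $k$ enters. Let $T$ be the set of tight vertex-capacity constraints that remain and let $\cF$ be a chain of tight matroid-rank constraints $x(A)=r(A)$ spanning the span of tight matroid constraints; by the standard uncrossing argument these can be chosen laminar. If $x$ is fully fractional, then $|\operatorname{supp}(x)| \le |T| + |\cF|$. Assign to each fractional edge $e$ a total of one token, distributed among the at most $k$ vertices in $e$ that still have tight constraints. A laminar chain on $n'$ elements with all-fractional support and no integral coordinate can be shown, via the classical matroid argument, to collect strictly more tokens than it is entitled to unless some capacity constraint is redundant — the contradiction forces either an integral coordinate or a droppable constraint. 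The ``droppable'' threshold is exactly what controls the final violation: a vertex $v$ is declared droppable once the demands of the edges still incident to it that are \emph{not yet fixed to $1$} sum to at most $k \cdot \max_{e \in \delta(v)} d_{v,e}$, because then even taking all of them adds at most $k \cdot \max_{e\in\delta(v)} d_{v,e}$ on top of whatever was already committed, i.e.\ on top of at most $b_v$.

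Putting the pieces together: when the recursion terminates, every edge we committed (fixed to $1$) contributed to a chain of tight matroid constraints that was never violated, so the set $\solb$ we output is a base of $\cM$ with $p(\solb) \le \OPT\eqref{lp-b}$. For each vertex $v$, its capacity constraint was either kept tight throughout (so $d_v(\delta_\solb(v)) \le b_v$) or was dropped at some round, at which point the already-committed demand at $v$ was at most $b_v$ and the remaining incident uncommitted edges had total demand at most $k\cdot\max_{e\in\delta(v)}d_{v,e}$; adding the worst case gives $d_v(\delta_\solb(v)) \le b_v + k\cdot\max_{e\in\delta(v)}d_{v,e}$, as claimed. I expect the main obstacle to be the token-counting lemma in its full generality — in particular setting up the uncrossing for the matroid constraints simultaneously with the (non-laminar, simple) degree-type constraints and verifying the token inequality is strict — but this is a routine-if-delicate adaptation of the Lau–Ravi–Singh style argument, and the hypergraph edge size $k$ is precisely the fan-out that makes the token budget work out to the stated additive slack.
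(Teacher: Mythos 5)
Your high-level plan matches the paper: iterated relaxation applied to \eqref{lp-b} rather than \eqref{lp-m}, with the matroid-base constraint preserved under contraction/deletion and a token/counting argument showing that some coordinate is integral or some capacity constraint can be relaxed. The paper in fact proves the corollary as a near-verbatim rerun of Algorithm~\ref{alg:iter} and Lemmas~\ref{lem:terminate}--\ref{lem:3bv}, citing integrality of the matroid base polytope.

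However, there is a genuine gap in your dropping rule. You declare $v$ droppable when the \emph{total residual demand} of uncommitted incident edges is at most $k \cdot \max_{e\in\delta(v)} d_{v,e}$, and assert that the token-counting argument forces such a vertex to exist when no coordinate is integral. That is not what the counting argument yields. The uncrossing/token argument (Claim~\ref{clm:atmosttwo} in the paper) establishes only that some tight vertex $v$ has small \emph{fractional degree deficit}, $|\delta_F(v)| - x^*(\delta_F(v)) \le k$; it says nothing directly about the demand mass sitting at $v$. A vertex can have many incident fractional edges, all with $x^*$-value close to $1$, so that the deficit is tiny while the residual demand is far larger than $k\cdot\max d_{v,e}$; your criterion would never fire there, and the iteration could stall with neither an integral coordinate nor a droppable constraint. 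The correct route (Lemma~\ref{lem:3bv}) is: drop by the counting criterion, and then use LP feasibility at the moment of dropping to argue that $d_v\bigl(\delta_F(v) - F^k_v\bigr) \le b'_v$, where $F^k_v$ is the $k$ highest-demand fractional edges at $v$. Adding back the already-committed demand gives $d_v(\delta_\solb(v) - L(v)) \le b_v$, from which $d_v(\delta_\solb(v)) \le b_v + k\cdot\max_{e\in\delta_\solb(v)} d_{v,e}$ follows. In short, the drop rule must be a counting condition, not a demand condition; the demand bound is a consequence, not the trigger.

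One smaller remark: you describe the matroid tight-set family as a laminar chain obtained by uncrossing. For a single matroid this family can be taken to be a chain (as the paper does), not just laminar; the counting then gives $|F| \le t + |W|$ and, via the handshake bound using edge size $\le k$, the deficit bound above. The rest of your bookkeeping (cost monotonicity under minimization, contraction/deletion preserving the base constraint, the final per-vertex accounting once the correct drop rule is in place) is sound and mirrors the paper's proof.
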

\begin{theorem}\label{thm:bic}
There is a $(1, 1+k)$-bicriteria approximation for \textsc{Minimum Bounded-Congestion Matroid Basis}.
\end{theorem}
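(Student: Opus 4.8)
The plan is to obtain Theorem~\ref{thm:bic} directly from Corollary~\ref{cor:congestion}, after a preprocessing step that enforces the analogue of the hypothesis ``$d_{v,e}\le b_v$'' used in Theorem~\ref{thm:main}. We work in the setting of the corollary, where each (hyper)edge of $G$ has size at most $k$. Given an instance with matroid $\cM=(E,\cI)$, I would first delete every edge that is too heavy for some endpoint, i.e. set
\[
E' \;=\; \{\, e \in E : d_{v,e} \le b_v \ \text{for every}\ v \in e \,\},
\]
let $\cM'$ be the restriction of $\cM$ to $E'$, and let $r'$ be its rank function. The point is that no feasible base can use a deleted edge: if $\solb$ is a base of $\cM$ with $d_v(\delta_\solb(v))\le b_v$ for all $v$ and some $e\in\solb$ has $d_{v,e}>b_v$ at an endpoint $v$, then $d_v(\delta_\solb(v))\ge d_{v,e}>b_v$ because demands are nonnegative. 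Hence every feasible base lies in $E'$ and is a base of $\cM'$; in particular, since $r'(E')\le r_\cM(E)$ always, if this inequality is strict we may safely report that no feasible solution exists.

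Next I would invoke Corollary~\ref{cor:congestion} on the sub-hypergraph with edge set $E'$ and matroid $\cM'$ (edge sizes are still at most $k$). If it reports that the corresponding relaxation \eqref{lp-b} has no integral point, we report infeasibility: an integral point of that LP is exactly a base of $\cM'$ obeying all capacities, i.e. a feasible solution, so its nonexistence means the original instance is infeasible. Otherwise it returns a base $\solb$ of $\cM'$ — which is a base of $\cM$ since $r'(E')=r_\cM(E)$ in the surviving case — with $p(\solb)\le\OPT\eqref{lp-b}$ (for the restricted instance) and
\[
d_v(\delta_\solb(v)) \;\le\; b_v + k\cdot \max_{e\in\delta_{E'}(v)} d_{v,e} \;\le\; b_v + k\, b_v \;=\; (1+k)\, b_v
\quad \text{for all } v \in V,
\]
where the middle inequality is the defining property of $E'$ (and an empty maximum is read as $0$). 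This is exactly the claimed congestion guarantee.

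For the cost bound, suppose a feasible base exists (otherwise we have already reported infeasibility) and let $B^{*}$ be one of minimum cost. By the observation above $B^{*}\subseteq E'$, so $B^{*}$ is a base of $\cM'$ and its indicator vector $x$ satisfies $x(A)\le r'(A)$ for all $A\subseteq E'$, $x(E')=|B^{*}|=r_\cM(E)=r'(E')$, and $\sum_{e\in\delta(v)}d_{v,e}x_e=d_v(\delta(v)\cap B^{*})\le b_v$ for all $v$; hence $x$ is feasible for the restricted \eqref{lp-b}, giving $\OPT\eqref{lp-b}\le p(B^{*})$. Chaining this with $p(\solb)\le\OPT\eqref{lp-b}$ and the congestion bound above shows $\solb$ is a $(1,1+k)$-bicriteria solution.

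All the technical content sits inside Corollary~\ref{cor:congestion} (the iterated-relaxation argument on \eqref{lp-b} that yields the additive $k\cdot\max_{e}d_{v,e}$ slack); given that result, the only thing to be careful about in this reduction is that deleting the heavy edges neither discards an edge that some feasible base needs nor changes which instances are feasible — both handled above — so I do not expect any genuine obstacle here.
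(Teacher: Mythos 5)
Your proposal is correct and matches the paper's (implicit) intended argument. The paper states Theorem~\ref{thm:bic} immediately after Corollary~\ref{cor:congestion} without a separate proof, and the only missing step is exactly the preprocessing you supply: delete edges $e$ with $d_{v,e}>b_v$ at some endpoint $v$, observe that no feasible base can use such an edge (so the set of feasible bases, and hence the optimum, is unchanged, and if the restricted rank drops one may declare infeasibility), and then apply Corollary~\ref{cor:congestion} to the surviving instance where $\max_{e\in\delta(v)}d_{v,e}\le b_v$, yielding congestion at most $b_v+k\cdot b_v=(1+k)b_v$. The only cosmetic slip is the parenthetical ``otherwise we have already reported infeasibility'': the algorithm of Corollary~\ref{cor:congestion} reports infeasibility only when \eqref{lp-b} has no fractional point, so it may return a base even when no feasible base exists, but this does not affect the bicriteria guarantee (which is vacuous in that case).
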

In particular, there is a $(1,3)$-bicriteria approximation for \textsc{Minimum Bounded-Congestion Spanning Tree}, beating the previous best $(1,4)$-bicriteria approximation \cite{FN09}.
Theorem \ref{thm:bic} matches the bound in \cite{KMRT14} for the special case of \textsc{Coupled Placement} in $k$-partite hypergraphs, but in a more general setting.

One could also ask if we can generalize Theorem \ref{thm:main} to hypergraphs. An $O(k)$-approximation is already known \cite{BKNS12} and the integrality gap
of \eqref{lp-m} is $\Omega(k)$ even without matroid constraints,
so we could not hope for an asymptotically better approximation. We remind the reader that our focus in \gdmm is improved constants in the case of graphs ($k=2$).

Our second class of results are quite easy to state. We study \gdms in families of graphs that exclude a fixed minor. It is easy to see \gdms is strongly {\bf NP}-hard in planar graphs if one allows parallel edges as it is even strongly
{\bf NP}-hard with just two vertices, {\em e.g.} see \cite{GL80,MC84}. We show the presence of parallel edges is not the only obstacle to getting an FPTAS for \gdms (or even \dms) in planar graphs.
\begin{theorem}\label{thm:planar_hard}
\dms is NP-hard in simple, bipartite planar graphs even if all demands, capacities, values, and vertex degrees are integers bounded by a constant.
\end{theorem}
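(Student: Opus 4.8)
The plan is to reduce from a known strongly NP-hard problem whose instances are naturally "two-dimensional" so that the resulting graph can be drawn in the plane without crossings. A natural candidate is a variant of \textsc{3-Partition} or, better suited to the matching structure, a bounded-degree bounded-weight version of \textsc{Exact Cover by 3-Sets} or \textsc{Planar 3-Dimensional Matching}; since we need all demands, capacities, values, and degrees bounded by a constant, I would start from a constant-size-restricted strongly NP-hard problem such as \textsc{Planar Vertex Cover} / \textsc{Planar Independent Set} on 3-regular (or 4-regular) planar graphs, or the \textsc{Planar 3-SAT}-style gadget framework, and build a \dms instance encoding it. The key point is that \dms with equal demands on both endpoints of an edge is exactly a "knapsack-constrained matching" at each vertex, and with small integer data a vertex of capacity $b_v$ and incident demand values acts like a small subset-sum / exact-cover gadget.

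Concretely, I would first fix the source problem to be a variant that is strongly NP-hard with all numeric parameters bounded by a constant and that has a planar incidence structure — for instance, a restricted \textsc{3-Dimensional Matching} or an \textsc{Exact Cover}-type problem where each element appears in a bounded number of sets and each set has bounded size, known to remain NP-hard in the planar setting. Second, I would construct the bipartite planar graph $G$: one side of vertices representing "elements" (or clauses) with capacities chosen so that a feasible selection of edges corresponds exactly to covering each element the right number of times, and the other side representing "sets" (or variables) with capacities enforcing that at most one representative edge per set is chosen; parallel edges are avoided by subdividing or by routing through small constant-size gadgets. Third, I would argue planarity: since the source instance is planar and each gadget is of constant size, the whole construction can be laid out planarly, and a standard parity/subdivision trick makes it simple and bipartite while keeping degrees, demands, capacities, and values all bounded by an absolute constant. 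Fourth, I would set all edge values to $1$ (or to a constant) so that the optimum value of the \dms instance equals a target $T$ if and only if the source instance is a YES-instance, completing the reduction.

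The main obstacle I anticipate is twofold. First, making everything simultaneously \emph{simple}, \emph{bipartite}, \emph{planar}, and with \emph{all four parameter types bounded by a constant} is delicate: each relaxation one wants (e.g. allowing parallel edges, or large capacities) makes the hardness trivial, so the gadgets must be engineered carefully, and one typically has to start from a source problem that is already known hard under exactly the right restrictions rather than from a generic one. Second, I must be careful about the asymmetry/symmetry of demands — Theorem~\ref{thm:planar_hard} is about \dms (equal demands on both endpoints), which is more constrained than \gdms, so the gadget must encode the combinatorics using only a single demand value per edge shared by both endpoints. I expect the cleanest route is to encode an exact-cover-style constraint at element vertices (capacity equals the demand value times the required multiplicity, with all incident edges having that same small demand) and a "choose one" constraint at set vertices, verifying that the only way to hit the target profit is to pick a valid exact cover; checking that this equivalence is exact, and that no "cheating" partial selections reach the target, is the part that requires the most care.
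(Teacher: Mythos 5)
Your high-level strategy — reduce from a planar, bounded-degree, numerically small NP-hard problem and build constant-size bipartite planar gadgets — is exactly the strategy the paper takes. The paper reduces from Lichtenstein's restricted \textsc{Planar SAT} (bounded degree, and with the further property that clauses containing the positive literal $x$ appear consecutively around the vertex for $x$ in a planar embedding). So you have the right family of source problems in mind when you mention ``Planar 3-SAT-style gadget framework.''

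However, the proposal stops short of the actual content of the proof: it never gives the gadget, and that gadget is what makes the theorem true. The paper's construction is short but not obvious: each variable $x$ becomes three vertices $v_x, t_x, f_x$ with capacity $D$ (the degree bound), plus two ``big'' edges $v_x t_x$ and $v_x f_x$ of demand and profit $D$; each clause $c$ becomes a vertex $u_c$ of capacity $1$, with unit-demand unit-profit edges $u_c t_x$ for $c$ containing $\overline{x}$ and $u_c f_x$ for $c$ containing $x$. The point is that taking the big edge into, say, $t_x$ (encoding $x=\texttt{true}$) saturates $t_x$ and frees $f_x$, so a clause vertex $u_c$ can collect its unit of profit iff the truth assignment satisfies $c$. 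This realizes the ``choose one truth value per variable'' constraint and the ``at most one satisfying literal contributes per clause'' constraint purely with capacities, while keeping every parameter $\le D$, a constant. Your proposal anticipates the need for such a thing (``choose one'' at set vertices, exact-cover-style constraints at element vertices) but does not construct it, and in particular does not address how to get symmetric demands on both endpoints of an edge while staying simple, bipartite, and planar.

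The second missing piece is the planarity argument. It is precisely because the source problem is Lichtenstein's variant — positive occurrences of each variable are embedded consecutively around the variable vertex — that the splitting of the variable vertex into $v_x$ flanked by $t_x$ and $f_x$ preserves planarity: all the $u_c f_x$ edges route to one side and all the $u_c t_x$ edges to the other, without crossings. Your proposal says ``since the source instance is planar and each gadget is of constant size, the whole construction can be laid out planarly,'' but this is not automatic; a generic planar 3-SAT or planar 3DM instance would \emph{not} obviously yield a planar gadget graph under a naive variable splitting, and you would need to rediscover (or cite) the consecutive-occurrence restriction to make it go through. Until the gadget and the embedding argument are actually written, the proof is a plan, not a proof.
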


We then present our main result in this vein, which gives a PTAS for \gdms in planar graphs among other graph classes.
\begin{theorem}\label{thm:ptas}
\gdms admits a PTAS in families of graphs that exclude a fixed minor.
\end{theorem}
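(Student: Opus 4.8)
The plan is to combine Baker-style layering for minor-closed graph families with a dynamic-programming algorithm on bounded-treewidth graphs, handling the essential difficulty — that demands can be arbitrary real numbers, so a naive DP state space is unbounded — via the sparsification lemma advertised in the abstract.

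\medskip

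\noindent\textbf{Step 1: Reduce to bounded treewidth.} Using the structure of graphs excluding a fixed minor, I would invoke the standard decomposition (Demaine--Hajiani--Kawarabayashi or the classical Baker technique adapted to minor-closed families via Robertson--Seymour): for any integer $t = t(\epsilon)$, the edge set $E$ can be partitioned into $t$ classes $E_1, \ldots, E_t$ so that deleting any one class $E_i$ leaves a graph of treewidth $O(t)$ — or, more precisely, one works with the vertex/edge layering so that contracting or removing a bounded fraction of the optimum's profit isolates us to a bounded-treewidth instance. Picking $i$ so that $p(\OPT \cap E_i) \le \epsilon \cdot p(\OPT)$ (possible by averaging), it suffices to $(1+O(\epsilon))$-approximate \gdms on a graph of treewidth $w = O(t)$, since discarding $E_i$ costs only an $\epsilon$-fraction of the optimum.

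\medskip

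\noindent\textbf{Step 2: DP on bounded treewidth with polynomially-bounded integer demands.} If all demands were integers bounded by $\mathrm{poly}(n)$, a routine DP over a tree decomposition works: the state at a bag records, for each vertex in the bag, how much capacity has been consumed so far by edges in processed subtrees. Since each $b_v$ is effectively bounded by $n \cdot \max_e d_{v,e} \le \mathrm{poly}(n)$ after truncation, and a bag has $\le w+1$ vertices, the number of states per bag is $n^{O(w)}$, which is polynomial for constant $w$. This yields an exact algorithm in this special case, and is the ``somewhat easy'' part noted in the abstract.

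\medskip

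\noindent\textbf{Step 3: The sparsification lemma and handling arbitrary demands (the main obstacle).} The hard part is that general demands are arbitrary reals, so capacities cannot be discretized globally. The key idea is that for each vertex $v$, at most $O(1/\epsilon)$ of its incident chosen edges can be ``large'' (demand $> \epsilon b_v$); these we guess/enumerate or track exactly in the DP at an $n^{O(w/\epsilon)}$ cost, contributing $O(w/\epsilon)$ extra coordinates to each bag state. For the ``small'' edges at $v$ (demand $\le \epsilon b_v$), I would round their demands down to the nearest multiple of $\epsilon^2 b_v$ — creating $O(1/\epsilon^2)$ distinct rounded demand levels — and allow the scaled-demand solution to overload each capacity by a $(1+\epsilon)$ factor, so the DP need only track capacity usage in units of $\epsilon^2 b_v$, i.e. $O(1/\epsilon^2)$ states per vertex. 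This is where the sparsification lemma does its work: it guarantees that scaling the demands of the small items in this controlled way changes the achievable profit by at most a $(1+O(\epsilon))$ factor while keeping the state space bounded. Finally, the scaled-demand DP solution may violate the original capacities (by the rounding slack plus the $(1+\epsilon)$ overload); I would apply randomized rounding / alteration — independently drop each small edge with a small probability, or greedily remove the lowest-profit-density small edges at each overloaded vertex — to restore feasibility for the \emph{original} demands while losing only an $O(\epsilon)$ fraction of profit in expectation. Derandomization, if desired, follows by the method of conditional expectations. Taking $\epsilon \to \epsilon/c$ for a suitable constant $c$ absorbs all the $O(\epsilon)$ losses into a final $(1+\epsilon)$ guarantee, and the running time $n^{f(\epsilon)}$ is polynomial for fixed $\epsilon$. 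The cubic-looking dependence is not needed here; the delicate accounting is in showing the three sources of error (layer discard, demand rounding, rounding-back-to-feasible) compose additively rather than multiplicatively into a blow-up.
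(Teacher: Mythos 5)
Your high-level plan — DHK/Baker layering to bounded treewidth, then a DP that guesses a few ``large'' edges per vertex and rounds the demands of ``small'' edges before pruning back to a feasible solution — does match the paper's architecture. But two technical choices in Step 3 are wrong in ways that matter, and both are precisely what the paper's Sparsification Lemma is engineered to fix.

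\emph{Wrong notion of ``small.''} You classify an edge at $v$ as small when $d_{v,e} \le \epsilon b_v$ and note there are at most $O(1/\epsilon)$ large edges. That bound is correct for your definition, but it is the wrong definition. After the large edges are committed, the relevant quantity is the \emph{residual} capacity $\overline b_v = b_v - d_v(\delta_{M_v}(v))$, which can be far smaller than $b_v$. Your small edges, with demand up to $\epsilon b_v$, may individually be on the order of $\overline b_v$ or larger, so the ``all items are $\epsilon$-small'' pruning machinery (the last bullet of Theorem \ref{thm:main}, or any Chernoff/Chebyshev argument) does not apply to the residual instance. The paper's Lemma \ref{lem:sparse} is exactly about manufacturing a near-optimal solution in which the surviving non-$M_v$ edges are small relative to $\overline b_v$, and achieving it requires the trick of taking $L_v$ to be the top $1/\epsilon^2$ edges and deleting a \emph{random} $1/\epsilon$-subset $R_v$ — the remaining top edges become $M_v$, and the pigeonhole-style inequality $\frac{1}{\epsilon} d_{v,e} \le d_v(R_v) + d_{v,e} \le b_v - d_v(\delta_{M_v}(v))$ yields the residual smallness. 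This is why $|M_v|$ is $1/\epsilon^2$, not $1/\epsilon$. Your proposal has no analogue of this step.

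\emph{Wrong rounding granularity.} Rounding small demands to multiples of $\epsilon^2 b_v$ gives $O(1/\epsilon^2)$ state values, which is attractive, but the accumulated error is $|\delta_\solI(v)| \cdot \epsilon^2 b_v$, which is not $O(\epsilon \overline b_v)$ when $v$ has many small incident edges (and in particular every edge with $d_{v,e} < \epsilon^2 b_v$ rounds to zero, so arbitrarily many of them can be declared ``free''). Feasibility of the rounded solution then says nothing useful about the original demands. The paper instead rounds to multiples of $\frac{\epsilon}{|E|}\overline b_v$; the $1/|E|$ factor is what caps the total additive error at $\epsilon\overline b_v$, at the price of an $a_{v,\kappa}\in\{0,\dots,|E|/\epsilon\}$ counter per vertex and direction in the DP state. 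Your runtime estimate $n^{O(w/\epsilon)}$ should therefore be $|E|^{O(w/\epsilon^2)}\cdot(|E|/\epsilon)^{O(w)}$, still polynomial for fixed $w,\epsilon$, but with the per-vertex state space depending on $|E|$, not just on $\epsilon$. Without that dependence the composition argument in your last paragraph — rounding error plus $(1+\epsilon)$ overload plus alteration loss all being $O(\epsilon)$ — does not hold.
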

This is obtained through the usual reduction to bounded-treewidth graphs \cite{EHK05}.
We would like to scale demands to be polynomially-bounded integers, as then it is easy to solve the problem using dynamic programming over the tree decomposition. But packing problems
are too fragile for scaling demands naively: an infeasible solution may be regarded as feasible in the scaled instance.

We circumvent this issue with a sparsification lemma showing there is a near-optimal solution $\sol$ where, for each vertex $v$, after packing a constant number of edges across $v$
the remaining edges in $\delta_\sol(v)$ have very small demand compared with even the residual capacity. Our dynamic programming algorithm then guesses these large edges in each bag
of the tree decomposition and packs the remaining edges according to scaled values. The resulting solution may be slightly infeasible, but the blame rests on our scaling of {\em small} edges
and certain pruning techniques can be used to whittle this solution down to a feasible solution with little loss in the profit.


\subsection{Related Work}
\dms (the case with symmetric demands) is well-studied. Shepherd and Vetta initially give a 3.264-approximation in general graphs and a 2.764-approximation in bipartite graphs \cite{SV07}. These are all with
respect to the natural LP relaxation, namely \eqref{lp-m} with matroid constraints replaced by $x_e \leq 1, \forall e \in E$. They also prove that \dms
is {\bf APX}-hard even in bipartite graphs and give an FPTAS in the case $G$ is a tree.

Parekh~\cite{O11} improved the integrality gap bound for general graphs to 3 in cases of \gdms that have a consistent ordering of edges. Singh and Wu improve the gap in bipartite graphs
to 2.709 \cite{SW16}. The lower bound on the integrality gap for general graphs is 3 \cite{SV07}, so the bound in \cite{O11} is tight. In bipartite graphs, the gap is at least 2.699 \cite{SW16}.

Bansal, Korula, Nagarajan, and Srinivasan study the generalization of \gdms to hypergraphs \cite{BKNS12}. They show if each edge has at most $k$ endpoints, the integrality gap of the natural LP relaxation is
$\Theta(k)$. They also prove that a slight strengthening of this LP has a gap of at most $(e+o(1)) \cdot k$. Even more relevant to our results is that they prove if the value function over the edges is
submodular, then rounding a relaxation based on the multilinear extension of submodular functions yields a $\left(\frac{e^2}{e-1} + o(1) \right) \cdot k$-approximation. For $k=2$, this immediately gives a constant-factor
approximation for \gdmm by considering the submodular objective function $f : 2^E \rightarrow \mathbb R$ given by $f(S) = \max\{ p(S') : S' \subseteq S, S' \in \mathcal I\}$.

They briefly comment on the case $k=2$
in their work and say that even optimizations to their analysis for this special case yields only a 11.6-approximation for \dms (i.e. without a matroid constraint).
So our $\ratio$-approximation for \gdmm is an improvement over their work.
They also study the case where $d_{v,e} \leq \epsilon \cdot b_v$ for each $v \in V$ and each hyperedge $e \in \delta(v)$
and present an algorithm for \gdms with submodular objective functions whose approximation guarantee tends to $\frac{4e^2}{e-1}$ as $\epsilon \rightarrow 0$ (with $k$ fixed).

As noted earlier, our results yield improvements for two specific problems.
First, our 7-approximation for \gdmm in bipartite graphs improves over the 15-approximation for \textsc{Coupled Placement} \cite{KMRT14}.
The generalization of \textsc{Coupled Placement} to $k$-partite hypergraphs is also studied in \cite{KMRT14} where they obtain an $O(k^3)$-approximation, but this was already inferior to the $O(k)$-approximation in \cite{BKNS12}
when viewing it as a submodular optimization problem
with packing constraints.

Second, our work also applies to the \textsc{Minimum-Congestion Spanning Tree} problem, defined earlier. 
Determining if there is even a feasible solution is {\bf NP}-hard as this models the Hamiltonian Path problem. A famous result of Singh and Lau
shows if all demands are 1 (so we want to bound the degrees of the vertices) then we can find a spanning tree with cost at most the optimum cost (if there is any solution) that violates the degree bounds additively by +1
\cite{SL07}. In the case of arbitrary demands, the best approximation so far is a $(1,4)$-approximation \cite{FN09}: it finds a spanning tree whose cost is at most the optimal cost and violates the capacities by a factor of at most 4.
It is known that obtaining a $(1,c)$-approximation is {\bf NP}-hard for any $c < 2$ \cite{G07}.



\section{Approximation Algorithms for Generalized Demand Matching over Matroids}\label{sec:alg}

Here we present approximation algorithms for \gdmm and prove Theorem~\ref{thm:main} and Corollary~\ref{cor:congestion}. Our algorithm consists of two phases: the iterative relaxation phase and the pruning phase.
The first finds a set $\sol \in \mathcal I$ with $p(\sol) \geq \OPT\eqref{lp-m}$ that places demand at most $b_v + 2 \cdot \max_{e \in \delta_\sol(v)} d_{v,e}$ on each $v \in V$.
The second prunes $\sol$ to a feasible solution, different pruning strategies are employed to prove the various bounds in Theorem \ref{thm:main}.

\subsection{Iterative Relaxation Phase}
This part is presented for the more general case of hypergraphs where each edge has at most $k$ endpoints. Our \gdmm results in Theorem \ref{thm:main} pertain to $k=2$, but we will use properties
of this phase in our proof of Corollary \ref{cor:congestion}.
The algorithm starts with \eqref{lp-m} and iteratively removes edge variables and vertex capacities.


We use the following notation. For some $W \subseteq V, F \subseteq E$, a matroid $\mathcal M'$ with ground set $F$, and values $b'_v, v \in W$ we let \ref{lp-m}$[W,F,\mathcal M',b']$
denote the LP relaxation we get from \eqref{lp-m} over the graph $(V,F)$ with matroid $\cM'$ where we drop capacity constraints for $v \in V-W$ and use capacities $b'_v$ for $v \in W$.

Note that the relevant graph for \ref{lp-m}$[W,F,\mathcal M',b']$ still has all vertices $V$, it is just that some of the capacity constraints are dropped.
Also, for a matroid $\cM'$ and an edge $e \in F$ we let $\cM'- e$ be the matroid obtained by deleting $e$ and, if $\{e\}$ is independent in $\cM'$,
we let $\cM' / e$ be the matroid obtained by contracting $e$ (i.e. a set $A$ is independent in $\cM' / e$ if and only if $A \cup \{e\}$ is independent in $\cM'$).


\begin{algorithm}
\caption{Iterated Relaxation Procedure for \gdmm} \label{alg:iter}
\begin{algorithmic}
\State $W \leftarrow V, F \leftarrow E, \cM' \leftarrow \cM$
\State $b'_v \leftarrow b_v$ for each $v \in V$
\State $\sol \leftarrow \emptyset$
\While{$F \neq \emptyset$}
\State solve \ref{lp-m}$[W,F,\cM',b']$ to get an optimum extreme point $x^*$
\If{$x^*_e = 0$ for some $e \in F$}
\State $F \leftarrow F-\{e\}$
\State $\cM' \leftarrow \cM' - e$ \Comment{fix $x^*_e$ to 0 from now on}
\ElsIf{$x^*_e = 1$ for some $e \in F$}
\State $F \leftarrow F-\{e\}$
\State $\cM' \leftarrow \cM' / e$
\State $\sol \leftarrow \sol \cup \{e\}$ \Comment{fix $x^*_e$ to 1 from now on}
\State $b'_v \leftarrow b'_v-d_{v,e}$ for each endpoint $v$ of $e$ \Comment{permanently allocate space for $e$}
\Else
\State let $v$ be any vertex in $W$ with minimum value $|\delta_F(v)| - x^*(\delta_F(v))$
\State $W \leftarrow W - \{v\}$ \Comment{drop the capacity constraint for $v$}
\EndIf
\EndWhile
\State \Return $\sol$
\end{algorithmic}
\end{algorithm}

Algorithm \ref{alg:iter} describes the steps in the iterated relaxation phase.
Correct execution and termination are consequences of the following two lemmas. Their proofs are standard for iterated techniques.
\begin{lemma}\label{lem:matroid}
Throughout the execution of the algorithm, whenever $\cM'$ is contracted by $e$ we have $\{e\}$ is independent (i.e. $e$ is not a loop) in $\cM'$.
\end{lemma}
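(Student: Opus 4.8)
The plan is to argue by induction on the iterations of Algorithm~\ref{alg:iter} that the extreme point $x^*$ of \ref{lp-m}$[W,F,\cM',b']$ satisfies $x^*_e < 1$ whenever $\{e\}$ is \emph{dependent} in $\cM'$ (i.e.\ $e$ is a loop), which immediately gives the claim: the algorithm only contracts $\cM'$ by an edge $e$ in the branch where $x^*_e = 1$, and by the above such an $e$ cannot be a loop. So I would focus on showing that a loop $e$ of $\cM'$ always has $x^*_e < 1$. In fact $x^*_e = 0$: the rank constraint $x(\{e\}) \le r_{\cM'}(\{e\}) = 0$ is present in \ref{lp-m}$[W,F,\cM',b']$ together with $x \ge 0$, forcing $x^*_e = 0$. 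Thus if $e$ is a loop it falls into the first branch (a zero variable) and is deleted rather than contracted.

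The remaining point is to make sure the invariant ``every loop of the current $\cM'$ has its variable forced to $0$ by a rank constraint'' is actually maintained, i.e.\ that the rank function used to write the LP is always that of the \emph{current} $\cM'$. I would spell out how $\cM'$ evolves: initially $\cM' = \cM$, and at each step we either delete an element ($\cM' \leftarrow \cM' - e$) or contract an independent element ($\cM' \leftarrow \cM'/e$); in either case $\cM'$ is a genuine matroid on the reduced ground set $F$, and \ref{lp-m}$[W,F,\cM',b']$ by definition uses $r_{\cM'}$. Since contraction is only ever applied to an $e$ with $\{e\}$ independent (which is exactly the property being proven, established in the $x^*_e = 1$ branch via the argument of the previous paragraph), the sequence of operations is well-defined, and the induction closes.

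The one thing to be slightly careful about — and the closest thing to an obstacle — is the base of the induction and the interaction with the ``$x^*_e = 1$'' branch: I must confirm that when the algorithm reaches that branch, $e$ is not a loop, which is precisely what the single-element rank argument delivers ($x^*_e = 1 > 0$ is incompatible with $x(\{e\}) \le 0$). After that, the contraction $\cM'/e$ is legitimate and the invariant passes to the next iteration. Everything else is bookkeeping: deletions never create loop issues, dropping a vertex from $W$ does not touch $\cM'$ at all, and updating $b'$ likewise leaves the matroid untouched. Hence the lemma follows.
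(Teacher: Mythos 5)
Your argument is correct and is essentially the same as the paper's one-line proof: feasibility of $x^*$ in \ref{lp-m}$[W,F,\cM',b']$ forces $x^*_e \leq r_{\cM'}(\{e\})$, so in the contraction branch $1 = x^*_e \leq r_{\cM'}(\{e\})$ implies $\{e\}$ is independent. The inductive scaffolding you add about $\cM'$ remaining a well-defined matroid is harmless bookkeeping but not needed beyond the single feasibility observation.
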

\begin{proof}
This is simply because $x^*$ is a feasible solution to \ref{lp-m}$[W,F,\cM',b']$, so whenever $\cM'$ is contracted by $e$ we have $1 = x^*_e \leq r_{\cM'}(\{e\})$.
That is, $\{e\}$ is independent in $\cM'$.
\end{proof}

\begin{lemma}\label{lem:terminate}
The algorithm terminates in polynomial time and the returned set \sol is an independent set in $\cM$ with $p(\sol) \geq \OPT\eqref{lp-m}$.
Furthermore, if at any point $W' = \emptyset$ then the corresponding extreme point solution $x^*$ is integral.
\end{lemma}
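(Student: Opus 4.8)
The plan is to establish three claims: (i) the algorithm is well-defined (at every iteration at least one of the three cases applies), (ii) it terminates in polynomially many iterations, and (iii) the invariants on \sol hold, including the ``furthermore'' clause. For the profit bound I would maintain the loop invariant that $\OPT$ of the current LP \ref{lp-m}$[W,F,\cM',b']$, plus $p(\sol)$ for the edges fixed to $1$ so far, is at least $\OPT\eqref{lp-m}$. This holds initially (with $\sol = \emptyset$) and is preserved under each of the three operations: deleting a variable fixed to $0$ does not change the LP optimum; fixing a variable to $1$ and updating $b'$ and $\cM'$ by contraction transfers exactly $p_e$ from the LP objective to $p(\sol)$ while keeping the residual LP feasible (using Lemma \ref{lem:matroid} to ensure $\cM'/e$ is a legitimate matroid); and dropping a capacity constraint only relaxes the LP, so its optimum cannot decrease. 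When $F = \emptyset$ the LP optimum is $0$, yielding $p(\sol) \geq \OPT\eqref{lp-m}$. Independence of \sol in $\cM$ follows because each time we add $e$ to \sol we contract $\cM'$ by $e$, and a sequence of contractions of elements added to \sol corresponds exactly to those elements being jointly independent in the original $\cM$.

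For the structural step — that \emph{some} case always applies, i.e. if no variable is $0$ or $1$ then there is a vertex $v \in W$ we may drop — I would use a counting/rank argument on the extreme point $x^*$. Suppose $0 < x^*_e < 1$ for all $e \in F$. An extreme point of \ref{lp-m}$[W,F,\cM',b']$ is determined by $|F|$ linearly independent tight constraints; since no nonnegativity constraint is tight, these must come from the capacity constraints (at most $|W|$ of them) and the matroid rank constraints. A standard uncrossing argument shows the tight rank constraints can be chosen to form a chain $\emptyset \neq A_1 \subsetneq A_2 \subsetneq \cdots \subsetneq A_\ell \subseteq F$, contributing at most $|F|$ independent constraints but with the property that a chain of length $\ell$ spanning a ground set of size $|F|$ "uses up" its full rank only if every containment is tight; combined with $|W| \geq 1$ tight capacity constraint this forces, by a token/counting argument, that $\sum_{v \in W}\bigl(|\delta_F(v)| - x^*(\delta_F(v))\bigr) < |W|$, hence some $v \in W$ has $|\delta_F(v)| - x^*(\delta_F(v)) < 1$, which is exactly the quantity minimized in the else-branch (so that vertex is a valid choice to drop, and in fact its residual slack is small — this is what later gives the $+2\max d_{v,e}$ guarantee). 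Termination then follows because each iteration strictly decreases $|F| + |W|$ (we either remove an edge from $F$ or a vertex from $W$), so there are at most $|E| + |V|$ iterations, each solving one LP in polynomial time via the separation oracle for the rank constraints.

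Finally, for the ``furthermore'' clause: if at some iteration $W = \emptyset$, then \ref{lp-m}$[\emptyset, F, \cM', b']$ has no capacity constraints at all, so $x^*$ is an extreme point of the matroid polytope of $\cM'$ (intersected with $x \geq 0$), and it is classical that the matroid polytope is integral — every extreme point is the indicator vector of an independent set. Hence $x^*$ is integral. I expect the main obstacle to be the counting argument in the structural step: getting the inequality $\sum_{v \in W}\bigl(|\delta_F(v)| - x^*(\delta_F(v))\bigr) < |W|$ right requires carefully accounting for how many independent tight constraints a chain of rank constraints can supply relative to $|F|$, and arguing there is genuine ``slack'' left over for the capacity constraints; the rest is bookkeeping that is routine for iterated-rounding proofs.
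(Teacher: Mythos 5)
Your proposal is on the right track for the core claims and matches the paper's approach for the profit invariant, termination bound, independence of $\sol$, and the integrality of the matroid polytope when $W = \emptyset$. The paper's proof is slightly more parsimonious in one respect: to show the algorithm is well-defined it does not need any counting argument at all. If $W \neq \emptyset$ the else-branch is trivially applicable (one simply picks the minimizer over a nonempty set), and if $W = \emptyset$ the observation that the matroid polytope is integral already guarantees some variable is $0$ or $1$ — which is exactly the ``Furthermore'' clause you prove correctly at the end. Your ``structural step'' paragraph is therefore not needed for this lemma; it belongs to the subsequent Claim~\ref{clm:atmosttwo}, which is used only to analyze how much the vertex capacities can be violated by the returned set.

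More importantly, the inequality you assert in that paragraph is too strong and would not hold. You claim $\sum_{v \in W}\bigl(|\delta_F(v)| - x^*(\delta_F(v))\bigr) < |W|$, hence a vertex with slack strictly below $1$. The correct bound obtainable from the chain-of-tight-rank-constraints plus handshaking argument is $\sum_{v \in W}\bigl(|\delta_F(v)| - x^*(\delta_F(v))\bigr) \le k\,|W|$ (with $k=2$ for graphs), giving a vertex with slack at most $2$ — this is what produces the $b_v + 2\max_e d_{v,e}$ guarantee in Lemma~\ref{lem:3bv}, not $b_v + 1\cdot\max_e d_{v,e}$. Indeed, the integrality-gap lower bound of $3$ for \dms rules out a $+1$-type slack in general. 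Concretely: the support size is $|F|\le |W| + t$ where $t$ is the length of a maximal chain of tight rank sets, and $x^*(F)\ge r(A_t)\ge t$, giving $\sum_{v\in V}\bigl(|\delta_F(v)|-x^*(\delta_F(v))\bigr) = k\bigl(|F|-x^*(F)\bigr)\le k|W|$; your stated $<|W|$ does not follow. Since that paragraph is not needed to prove Lemma~\ref{lem:terminate}, the lemma itself remains correctly proved by the remainder of your argument, but the constant error would propagate if you used this counting in the capacity-violation analysis.
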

\begin{proof}
Each iteration can be executed in polynomial time. The only thing to comment on here is that \ref{lp-m}$[W,F,\cM',b']$ can be solved in polynomial time
because we assume $\cM$ is given by an efficient separation oracle (so we also get one for each $\cM'$ encountered in the algorithm), and this suffices to separate the
constraints (e.g. Corollary 40.4a in \cite{schrijver}).

Next we consider termination.
Note that optimal solution $x^*$ in one step induces a a feasible solution for the \lp considered in
the next step by ignoring the edge that was discarded or fixed in this iteration (if any).
As the initial LP is feasible (e.g. using $x_e = 0$ for all $e \in E$), the LP remains feasible. 
Each iteration removes an edge from $F$ or a vertex from $W$. If $W$ ever becomes empty, then the only constraints defining
\ref{lp-m}$[W,F,\cM',b']$ are the matroid rank constraints. It is well-known such polytopes are integral (e.g. Corollary 40.2b in \cite{schrijver}), so the algorithm will remove an edge
in every subsequent iteration. That is, the algorithm terminates within $|E| + |V|$ iterations.

Finally, to bound $p(\sol)$ note that if an edge is dropped or a vertex constraint is relaxed in an iteration, the optimum solution value of the resulting LP does not decrease.
If an edge $e$ is added to $\sol$, the optimum solution of the value drops by at most $p_e$ since the restriction of $x^*$ to $F-\{e\}$ remains feasible and $p(\sol)$ increases by exactly $p_e$.
So, inductively, we have the returned set $\cM'$ satisfying $p(\cM') \geq \OPT\eqref{lp-m}$.
\end{proof}

The last statement in Lemma \ref{lem:terminate} emphasizes the last case in the body of the loop cannot be encountered if $W' = \emptyset$.

Next, we prove $\sol$ is a feasible demand matching with respect to capacities $b_v + k\cdot \max_{e\in \delta(v)} d_{e,v}$ for each $v\in V$ by utilizing the following claim. 
 
\begin{claim}\label{clm:atmosttwo}
In any iteration, if $0 < x^*_e < 1$ for each $e \in F$ then $|\delta_F(v)| \leq x^*(\delta_F(v)) + k$ for some $v \in W$.
\end{claim}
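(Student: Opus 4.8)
The plan is to run a standard counting argument on the extreme point $x^*$. Since $x^*$ is an extreme point of the polytope defining \ref{lp-m}$[W,F,\cM',b']$ and, by assumption, $0 < x^*_e < 1$ for all $e \in F$ (so none of the box constraints $x_e \ge 0$ or the rank constraints $x(\{e\}) \le 1$ are tight as single-element constraints forcing integrality), $x^*$ must be the unique solution of a full-rank subsystem of the remaining tight constraints: the vertex capacity constraints for $v \in W$ and a tight family of matroid rank constraints. First I would invoke the uncrossing lemma for matroid rank constraints: the tight rank constraints can be chosen to form a chain (or, more precisely, a laminar family, but a chain suffices here), say $A_1 \subsetneq A_2 \subsetneq \cdots \subsetneq A_\ell$, and together with a subset $T \subseteq W$ of tight capacity constraints, the vectors $\{\chi_{A_i}\}_{i=1}^\ell \cup \{d_v \cdot \chi_{\delta_F(v)}\}_{v \in T}$ are linearly independent and their number equals $|F|$.

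Next I would set up the token/counting argument. Assign to each edge $e \in F$ a total of one token, split as $x^*_e$ tokens ``up'' toward the smallest chain set $A_i$ containing $e$ and $1 - x^*_e$ tokens distributed among the tight vertices $v \in T$ incident to $e$ — here I would be slightly careful, since an edge may be incident to several tight vertices; the cleanest route is to argue for contradiction. Suppose the claim fails: $|\delta_F(v)| > x^*(\delta_F(v)) + k$, i.e. $|\delta_F(v)| - x^*(\delta_F(v)) \ge k+1$, for every $v \in W$ — in particular for every $v \in T$. Since every edge has at most $k$ endpoints, $\sum_{v \in T} (|\delta_F(v)| - x^*(\delta_F(v))) \le \sum_{v \in T}\sum_{e \in \delta_F(v)} (1 - x^*_e) \le k \sum_{e \in F} (1 - x^*_e)$. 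Counting the chain part separately, the standard argument shows $\sum_i 1 \le \sum_{e \in F} x^*_e$ with the slack on the chain side absorbing the ``up'' tokens; combining, $|T| + \ell \le k\sum_{e}(1-x^*_e) + \sum_e x^*_e$ would have to exceed $|F|$ to contradict $|T| + \ell = |F|$, so I want the reverse inequality. The real mechanism: from $|\delta_F(v)| - x^*(\delta_F(v)) \ge k+1$ for all $v \in T$ we get $|T|(k+1) \le \sum_{v\in T}(|\delta_F(v)| - x^*(\delta_F(v)))$, and bounding the right side by $k$ times the total fractional ``deficiency'' $\sum_{e\in F}(1-x^*_e) \le |F|$, while the chain constraints are independent from the $|T|$ capacity constraints forcing $|T| \le |F| - \ell \le |F|$; pushing these through yields $(k+1)|T| \le k \cdot$ (something $< |F|$), contradicting $|T| \ge$ (enough). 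I would actually follow the cleaner version: give each edge exactly one token, route $x^*_e$ to its chain-ancestor and $(1-x^*_e)$ to its tight endpoints (at least one exists for every edge not pinned by the chain, or we reach a contradiction with $|F| = |T| + \ell$ directly); then each chain set collects $> 0$ excess and each tight vertex collects $> k$ tokens under the negated hypothesis, so the total exceeds $|F|$, while the total is exactly $|F|$ — contradiction.

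The main obstacle I expect is the bookkeeping in the fractional token distribution when an edge is incident to two tight vertices (the $k=2$ case) and simultaneously counted by the chain: one has to ensure no token is double-counted and that the chain sets each receive a strictly positive surplus, which is where the freshly-uncrossed laminar/chain structure and the fact that no $x^*_e$ is $0$ or $1$ are both used. A secondary technical point is justifying that one may take the tight rank constraints to form a chain spanning the same space as the original tight family — this is the usual uncrossing submodularity argument for $r_{\cM'}$, which I would cite (e.g. from \cite{schrijver} or the iterated-rounding literature) rather than reprove. Once the chain structure and the linear-independence count $|F| = |T| + \ell$ are in hand, the inequality $|\delta_F(v)| \le x^*(\delta_F(v)) + k$ for some $v \in W$ (indeed some $v \in T \subseteq W$) falls out of the pigeonhole on tokens.
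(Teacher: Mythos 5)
Your plan has the right ingredients and is fundamentally the same argument as the paper's: uncross the tight rank constraints to a chain $A_1 \subsetneq \cdots \subsetneq A_t$, observe that the ranks strictly increase so $r_{\cM'}(A_i) \geq i$ (you need this explicitly; ``each chain set collects $>0$ excess'' is too weak — each must collect at least $1$), count $|F| \leq t + |W|$ using the basis size at the extreme point, and then do a handshaking-style bound over the vertex slacks. The paper, however, does this as a direct averaging computation — it shows
$\sum_{v \in W} (|\delta_F(v)| - x^*(\delta_F(v))) \leq k(|F| - x^*(F)) \leq k(|F| - r(A_t)) \leq k(|F|-t) \leq k|W|$
and concludes some $v \in W$ has slack at most $k$ — which avoids forming the tight-vertex basis $T$ explicitly and never needs a token accounting.

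Your token/contradiction packaging, as stated, has a real bookkeeping bug that you yourself anticipated. If each edge routes $(1-x^*_e)$ to \emph{each} of its tight endpoints, then the grand total of distributed tokens is not $|F|$ (it can be as large as $x^*(F) + k(|F| - x^*(F))$), so the clean ``total is exactly $|F|$'' contradiction does not hold. You can repair this: under the negated hypothesis, $t + (k+1)|T| \leq (\text{chain tokens}) + (\text{vertex tokens}) \leq x^*(F) + k(|F|-x^*(F))$, and combining with $|F| = t + |T|$ and $x^*(F) \geq t$ forces $|T| \leq 0$ and then $x^*(F) \geq |F|$, contradicting $x^*_e < 1$. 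But this is extra work; the direct averaging over $W$ (not just $T$) is shorter, works with the inequality $|F| \leq t + |W|$ rather than the exact basis count, and sidesteps the double-counting issue entirely. Also a minor point: the constraint vector for $v$ is $(d_{v,e})_{e \in \delta_F(v)}$, not a scalar multiple of $\chi_{\delta_F(v)}$ as you wrote.
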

\begin{proof}
Let $A_1 \subsetneq A_2 \subsetneq \ldots \subsetneq A_t \subseteq F$ be any {\em maximal-length chain of tight sets}. That is, $x^*(A_i) = r_{\cM'}(A_i)$ for each $A_i$ in the chain.
Then the indicator vectors $\chi_{A_i} \in \{0,1\}^F$ of the sets $A_i$ are linearly independent and every other $A \subseteq F$
with $x^*(A_i) = r_{\cM'}(A)$ has $\chi_A \in {\rm span}\{\chi_{A_i} : 1 \leq i \leq t\}$. This can be proven by using uncrossing techniques that exploit submodularity of $r_{\cM'}$, see Chapter $5$ of \cite{lau2011iterative}.

Now, as $A_{i-1} \subsetneq A_i$ for $1 < i \leq t$ and $x^*_e > 0$ for each $e \in F$, we see $r_{\cM'}(A_i) = x^*(A_i) > x^*(A_{i-1}) = r_{\cM'}(A_{i-1})$.
Since the ranks are integral and $r_{\cM'}(A_1) \neq 0$ (as $A_1 \neq \emptyset$ so $r(A_1) = x^(A_1) > 0$), then $r_{\cM}(A_i) \geq i$ for all $1 \leq i \leq t$.

Note that $|F| \leq t + |W|$ because the number of non-zero (fractional) variables is at most the size of a basis for the tight constraints.
We have
\begin{eqnarray*}
\sum_{v \in W} |\delta_{F}(v)| - x^*(\delta_{F}(v))
 & \leq & \sum_{v \in V} |\delta_{F}(v)| - x^*(\delta_{F}(v)) \leq k \cdot (|F|-x^*(F)) \\
 & \leq & k \cdot (|F|-r_{\cM'}(A_t)) \leq  k \cdot (|F|-t) \leq k \cdot |W|.
 \end{eqnarray*} 
The second bound holds because each edge has at most $k$ endpoints, so it can contribute $1 - x^*_e \geq 0$ at most $k$ times throughout the sum.
Thus, some $v \in W$ satisfies the claim.
\end{proof} 
 
\begin{lemma}\label{lem:3bv}
Algorithm \ref{alg:iter} returns a set $\sol \in \mathcal I$ such that
$d_v(\delta_{\sol}(v) - L(v)) \leq b_v$
where $L(v)$ denotes the $\min\{k, |\delta_{\sol}(v)|\}$ edges $e \in \delta_{\sol}(v)$ with greatest demand $d_{v,e}$ across $v$.
\end{lemma}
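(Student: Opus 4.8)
The plan is to track, for each vertex $v \in V$, the total demand accumulated across $v$ over the course of Algorithm~\ref{alg:iter}, distinguishing between edges that are added to $\sol$ \emph{while $v$ still has an active capacity constraint} (i.e. $v \in W$ at the time) and edges added \emph{after} the constraint for $v$ has been dropped. First I would observe that every edge placed in $\sol$ is added in the ``$x^*_e = 1$'' case, and at that moment the algorithm decrements $b'_v$ by $d_{v,e}$ for each endpoint $v$. So the invariant maintained throughout is that $b'_v = b_v - d_v(\{e \in \sol : e \text{ added while } v \in W\})$ for every $v$ that is still in $W$, and since the LP \ref{lp-m}$[W,F,\cM',b']$ is always feasible we have $b'_v \geq 0$ as long as $v \in W$. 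Hence the edges added to $\sol$ across $v$ \emph{before} $v$ is removed from $W$ together place demand at most $b_v$ on $v$.

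Next I would bound the edges added across $v$ \emph{after} $v$ leaves $W$. Here the key point is that $v$ is removed from $W$ only in the ``else'' branch, which — by Claim~\ref{clm:atmosttwo} with $k = 2$ (or general $k$ in the hypergraph statement) — is only ever executed when every remaining variable is strictly fractional and some vertex satisfies $|\delta_F(v)| \leq x^*(\delta_F(v)) + k$; moreover the algorithm picks a vertex minimizing $|\delta_F(v)| - x^*(\delta_F(v))$, so the removed $v$ satisfies this too, giving $|\delta_F(v)| - x^*(\delta_F(v)) \le k$, i.e. at most $k$ of the current edges at $v$ can later round up to $1$. Every edge subsequently added to $\sol$ across $v$ must lie in this $\delta_F(v)$ (edges already removed from $F$ are never added), and at most $\lceil x^*(\delta_F(v))\rceil + \ldots$ — more cleanly, at most $|\delta_F(v)|$ edges remain and $x^*(\delta_F(v)) \ge |\delta_F(v)| - k$ forces that all but at most $k$ of them are ``already mostly rounded,'' but the robust statement I actually need is just the counting one: since $|\delta_F(v)| \le x^*(\delta_F(v)) + k$ and each surviving $x^*_e < 1$, at most $k$ surviving edges at $v$ can be rounded to $1$ in later iterations. (If $v$ is never removed from $W$, there are simply no ``late'' edges and the bound $b_v$ holds outright.)

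Combining the two parts: $\delta_\sol(v)$ splits into the ``early'' edges, whose total demand across $v$ is at most $b_v$, and at most $k$ ``late'' edges. Removing from $\delta_\sol(v)$ the $\min\{k,|\delta_\sol(v)|\}$ edges of largest $d_{v,e}$ therefore deletes at least the late edges (or all of $\delta_\sol(v)$ if it is small), leaving a subset of the early edges; hence $d_v(\delta_\sol(v) - L(v)) \le b_v$, which is exactly the claim. That $\sol \in \mathcal I$ is already recorded in Lemma~\ref{lem:terminate}.

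The main obstacle I anticipate is making the ``late edges'' argument fully rigorous: one must be careful that the set $\delta_F(v)$ witnessing $|\delta_F(v)| - x^*(\delta_F(v)) \le k$ at the moment $v$ is dropped is a superset of all edges that can ever subsequently join $\sol$ across $v$ (true, since $F$ only shrinks), and that the ``$\le k$ of them round to $1$'' conclusion survives through all later iterations even though $x^*$ changes — this follows because every edge added to $\sol$ contributes $1 = x^*_e > $ its old fractional value, so the number of such edges is bounded by $|\delta_F(v)| - x^*(\delta_F(v))$ evaluated at the drop time, using that $x^*(\delta_F(v))$ at that time already ``paid for'' the non-rounded part. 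I would phrase this via a clean monovariant: $|\delta_{F}(v) \cap \sol_{\text{late}}|$ never exceeds the integrality gap $|\delta_F(v)| - x^*(\delta_F(v)) \le k$ at the removal step. Everything else is bookkeeping on the $b'_v$ update rule.
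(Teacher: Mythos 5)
The crucial step in your plan — ``at most $k$ surviving edges at $v$ can be rounded to $1$ in later iterations'' — is false, and this is where the argument breaks down. The inequality $|\delta_F(v)| - x^*(\delta_F(v)) \le k$ at the moment $v$ is dropped bounds the \emph{total fractional slack} on the surviving variables at $v$, not the number of edges that can later reach $1$. Nothing prevents all of them from doing so. For instance, suppose at the drop step $|\delta_F(v)| = 100$ and every $e \in \delta_F(v)$ has $x^*_e = 0.99$: the slack is $1 \le k$, yet once $v$'s constraint is removed, a later LP (say when $W = \emptyset$ and the matroid polytope is integral) may set all $100$ of these to $1$. Your proposed monovariant does not survive either: the quantity $|\delta_F(v)| - x^*(\delta_F(v))$ is not preserved once the constraint at $v$ is no longer in the LP, so there is no budget of $k$ to charge rounded-up edges against. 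The paper avoids this trap by not bounding the \emph{count} of late edges at all. Instead it returns to the fractional capacity constraint $\sum_{e\in\delta_F(v)} d_{v,e}x^*_e \le b'_v$ at the drop step and uses a mass-shifting argument: since $x^*(\delta_F(v)) \ge |\delta_F(v)| - k$, one can shift $x^*$-mass from high- to low-demand edges (only decreasing the weighted sum) until the $|\delta_F(v)| - k$ lowest-demand edges each carry a full unit, giving $d_v(\delta_F(v) - F^k_v) \le b'_v$. Combined with $d_v(\delta_\solI(v)) = b_v - b'_v$, this bounds the total remaining demand after removing the top $k$ edges directly — all late edges are accounted for by demand, regardless of how many there are.

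A secondary, smaller issue: even granting ``at most $k$ late edges,'' your final sentence that removing $L(v)$ ``deletes at least the late edges'' is not literally correct — late edges need not have the largest demands. The conclusion would still follow via an exchange argument (each late edge surviving in $\delta_\sol(v) - L(v)$ can be matched to a distinct early edge in $L(v)$ of at least as large demand), but as written the inference is a non sequitur. This same exchange is implicitly needed to close the paper's proof as well, since $L(v)$ is the top-$k$ of $\delta_\sol(v)$ while $F^k_v$ is the top-$k$ of $\delta_F(v)$ at the drop step; the paper leaves it terse but it is the correct and recoverable step, whereas the late-edge counting bound is not.
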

\begin{proof}
We know $\sol \in \mathcal I$ by Lemma \ref{lem:terminate}.
Consider an iteration where a vertex $v \in W$ is removed from $W$.
Claim~\ref{clm:atmosttwo} shows $|\delta_{F}(v)| \leq x^*(\delta_{F}(v)) + k$.

Let $F^k_v = \{e_1, \ldots, e_k\}$ be the $k$ edges of this iteration in $\delta_{F}(v)$ having largest demand (if $|\delta_F(v)| < k$ then let $F^k_v = \delta_F(v)$). Then
$$\sum_{e\in \delta_{F}(v) - F^k_v} d_{v,e} \leq \sum_{e\in \delta_{F}(v)} d_{v,e} \cdot x^*_{e,v} \leq b'_v.$$
The first bound follows because if we shift $x^*$-values from larger- to smaller-demand edges the value $\sum_{e\in \delta_{F}(v)} d_{v,e} x^*_{e,v}$ does not increase.
We can continue to do this until each $e \in \delta_F(v) - F^k_v$ has one unit of $x^*$-mass because $|\delta_F(v)| - k \leq x^*(\delta_F(v))$.

At this point of the algorithm, we have $d_v(\delta_{\solI}(v)) = b_v - b'_v$ (letting $\solI$ denote the set $\sol$ from the current iteration).
So $d_v(\delta_F(v) - F^k_v) + d_v(\delta_{\solI}(v)) \leq b_v$.
We conclude by noting the edges returned by the algorithm contains only edges in $\solI \cup F$ so $d_v(\sol - L(v)) \leq b_v$.
\end{proof} 

\begin{proof}[Proof of Corollary \ref{cor:congestion}]
If there is no feasible solution to \eqref{lp-m}, then there can be no integral solution.
Otherwise, we use the same iterated relaxation technique as in Algorithm \ref{alg:iter},
except on \eqref{lp-b}, whose polytope is the restriction of the polytope from \eqref{lp-m} to the base polytope of $\cM$ (which is also integral, Corollary 40.2d of \cite{schrijver}).

All arguments are proven in essentially the same way. So we can find, in polynomial time, a base \solb with $p(\solb) \leq \OPT\eqref{lp-b}$ where $d_v(\delta_{\solb}(v)) \leq b_v + k \cdot \max_{e \in \delta_{\solb}(v)} d_{v,e}$.
\end{proof}

\subsection{Pruning phase}
We focus on \gdmm ($k=2$) in this section and show how to prune a set $\sol \subseteq E$ satisfying the properties of Lemma \ref{lem:3bv} to a feasible solution $\solI \subseteq \sol$ while controlling the loss in its value.
Each part of Theorem~\ref{thm:main} is proved through the following lemmas. In each, for a vertex $v \in V$ we let $L(v)$ be the two edges with highest $d_v$-value in $\delta_\sol(v)$ (or $L(v) = \delta_\sol(v)$
if $|\delta_\sol(v)| \leq 1$). We also let $S(v) = \delta_{\sol}(v) - L(v)$ be the remaining edges. Note $d_v(\delta_{S(v)}(v)) \leq b_v$.

\begin{lemma}\label{lem:app9}
For arbitrary graph $G$ and arbitrary demands, we can find a feasible demand matching $\solI \sse \sol$ with $p(\solI) \geq p(\sol) \cdot \iratio$.
\end{lemma}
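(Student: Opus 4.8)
The plan is to start from the set $\sol$ produced by Algorithm~\ref{alg:iter}, which by Lemma~\ref{lem:3bv} satisfies $d_v(\delta_\sol(v) - L(v)) \le b_v$ for every $v$, where $L(v)$ is the (at most) two highest-demand edges of $\delta_\sol(v)$ across $v$, and whittle it down to a feasible solution. The key idea is that the only possible source of infeasibility at a vertex $v$ is the presence of edges of $L(v)$; if we simply knew how to afford to keep some of them, we would be done. So I would partition the ``large'' edges into a small number of classes and argue that a good fraction of the total profit can be recovered by choosing one class cleverly, or by a randomized/combinatorial charging argument that loses only the advertised factor $\iratio = 3/25$.

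Concretely, first I would set $\mathcal{L} = \bigcup_{v} L(v)$, the set of edges that are ``large'' at one of their endpoints. Since each $L(v)$ has at most two edges, $\mathcal{L}$ induces a subgraph of maximum degree at most $2$ on $V$ (counting an edge once per endpoint at which it is large), hence $\mathcal{L}$ decomposes into paths and cycles; by a standard argument it can be partitioned into a constant number of matchings (three suffice for paths and even cycles; odd cycles force the third). The non-large edges $S := \sol \setminus \mathcal{L}$ are automatically feasible on their own, because removing $L(v)$ from $\delta_\sol(v)$ already brings the load below $b_v$, so $S$ is a feasible demand matching. Now I would compare $p(S)$ against $p(\mathcal{L})$: either $p(S)$ is already a constant fraction of $p(\sol)$ (in which case we return $S$), or $p(\mathcal{L})$ carries most of the profit, in which case the matching structure of $\mathcal{L}$ lets us keep a constant fraction of $p(\mathcal{L})$ by an averaging/colouring argument — each matching $M_i \subseteq \mathcal{L}$ is feasible by itself (two edges never violate a capacity once each edge fits alone, because at a shared vertex $v$ at most one of them is ``large'' there... this needs care), so keeping the most profitable of the three matchings recovers $p(\mathcal{L})/3$. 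Balancing the two cases against the factor $25/3$ tells me the right threshold: roughly, if $p(S) \ge \tfrac{3}{25} p(\sol)$ output $S$, else $p(\mathcal{L}) \ge \tfrac{22}{25} p(\sol)$ and one of the three matchings has profit $\ge \tfrac{22}{75} p(\sol) > \tfrac{3}{25} p(\sol)$. Then combining with $p(\sol) \ge \OPT\eqref{lp-m}$ from Lemma~\ref{lem:terminate} finishes the bound $\OPT\eqref{lp-m}/p(\solI) \le \tfrac{25}{3}$.

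The main obstacle I expect is the claim that each colour class $M_i$ of $\mathcal{L}$ is actually a feasible demand matching. A matching of fit-by-themselves edges need not be feasible at a common vertex: if $e, f \in \delta_\sol(v)$ and both are ``large at $v$'', then $e, f \in L(v)$, and $d_{v,e} + d_{v,f}$ could exceed $b_v$. So I cannot simply colour $\mathcal{L}$ as an abstract graph; I must be careful to make the two edges of $L(v)$ receive different colours \emph{and} ensure that within a colour class no vertex sees two of its large edges. This is exactly why three colours (not two) appear, and why the subgraph-of-max-degree-$2$ structure must be oriented/coloured with the constraint ``the $\le 2$ edges large at $v$ get distinct colours'' — which is a proper edge colouring of a graph of max degree $2$, hence $3$ colours suffice. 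Once the colour classes are genuinely conflict-free at every vertex, each class is feasible because $S \cup M_i$ is \emph{not} what we keep — we keep only $M_i$, and a single edge per vertex in its large role plus possibly others in their small role still must be checked; the cleanest route is to keep $M_i$ together with all of $S$ restricted so that no capacity is exceeded, which may require a further small pruning. I would organize the write-up around the colouring lemma for $\mathcal{L}$ first, then the case split, then the arithmetic, and flag the feasibility verification of $M_i \cup (\text{affordable part of } S)$ as the step needing the most care.
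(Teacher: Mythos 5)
Your approach is genuinely different from the paper's, but it has a real gap that your own text half-identifies but does not close. You split $\sol$ globally into $S$ (edges small at both endpoints) and $\mathcal{L}$ (edges large at at least one endpoint), correctly observe that $S$ is feasible, $3$-colour $\mathcal{L}$ so that the at most two large edges at each vertex get distinct colours, and then case-split on $p(S)$ versus $p(\mathcal{L})$. The problem is that a colour class $M_i$ of $\mathcal{L}$ need \emph{not} be feasible. Your colouring constraint only separates edges that are \emph{large at the same vertex}; it does nothing to control edges that are large at one endpoint but small at $v$. At a vertex $v$, a single colour class can contain one edge of $L(v)$ together with many edges of $S(v)$ that happen to be large elsewhere, and these can jointly overflow $b_v$. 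Concretely: take $b_v = 10$, edges $e_1,e_2\in L(v)$ with $d_{v,e_1}=10$, $d_{v,e_2}=5$, and edges $e_3,e_4\in S(v)$ with $d_{v,e_3}=d_{v,e_4}=5$ where $e_3,e_4$ are each large at their other endpoint. Lemma~\ref{lem:3bv}'s guarantee $d_v(S(v))\le b_v$ holds, and a valid $3$-colouring can put $e_1,e_3,e_4$ in the same class, yielding load $20$ at $v$. So ``one of the three matchings has profit $\ge \tfrac{22}{75}p(\sol)$'' does not translate into a feasible output, and the final sentence of your proposal (``keep $M_i$ together with all of $S$ restricted so that no capacity is exceeded, which may require a further small pruning'') is not an argument: the required pruning can cost a constant fraction that you have not bounded.

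The paper avoids this by making a \emph{per-vertex} commitment rather than a global split. Each vertex is independently labelled \texttt{s} (with probability $\alpha$) or \texttt{l}, and an edge survives only if it \emph{agrees} with the label at both endpoints — i.e., it is in $S(v)$ when $v$ is \texttt{s} and in $L(v)$ when $v$ is \texttt{l}. This guarantees that at an \texttt{s}-vertex only small edges remain (feasible outright), and at an \texttt{l}-vertex at most two edges remain; after shattering the \texttt{s}-vertices the surviving graph has maximum degree $2$, so a random matching per path/cycle leaves at most one edge at each \texttt{l}-vertex. With $\alpha=2/5$ the three cases ($\texttt{ss}$, $\texttt{sl}$, $\texttt{ll}$) each survive with probability at least $3/25$, and the scheme derandomizes via pairwise independence. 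The essential idea your proposal is missing is exactly this vertex-level either/or: you cannot afford to keep, at the same vertex, both a large edge and other small edges — the decision has to be made locally, not by partitioning $\sol$ globally into $S$ and $\mathcal{L}$.
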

\begin{proof}
For each vertex $v$, label $v$ randomly with \texttt{s} with probability $\alpha$ or with $\texttt{l}$ with  probability $1-\alpha$ (for $\alpha$ to be chosen later).
Say $e \in \sol$ {\em agrees} with the labelling for an endpoint $v$ if either $e \in S(v)$ and $v$ is labelled \texttt{s}, or $v \in L(v)$ and $v$ is labelled \texttt{l}.
Let $\sola \subseteq \sol$ be the edges agreeing with the labelling on both endpoints.

Modify the graph $(V, \sola)$ by replacing each $v \in V$ labelled \texttt{s} with $|\delta_{\sola}(v)|$ vertices and reassigning the endpoint $v$ of each $e \in \delta_{\sola}(v)$ to one of these vertices in a one-to-one fashion.
See Figure \ref{fig:splitting} for an illustration. Call this new graph $\overline G$.
\begin{figure}
\begin{center}
\includegraphics[width=0.6\textwidth]{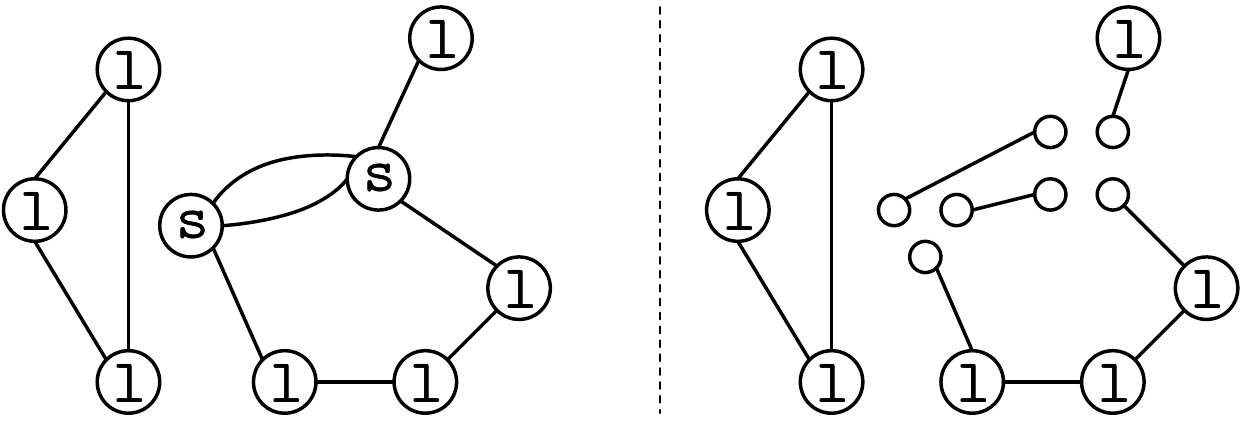}
\end{center}
\caption{
{\bf Left}: The graph with vertex labels \texttt{s} and \texttt{l} and edges $\sola$.
{\bf Right}: The graph $\overline G$ obtained by ``shattering'' the \texttt{s} vertices. Notice the maximum degree is 2, the \texttt{ss} edges are isolated, and the \texttt{sl} edges lie on paths.
}
\label{fig:splitting}
\end{figure}

Each vertex in $\overline G$ has degree at most 2 so $\overline G$ decomposes naturally into paths and cycles. Each path with $\geq 2$ edges can be decomposed into 2 matchings and each cycle can be
decomposed into 3 matchings.
Randomly choose one such matching for each path and cycle to keep and discarding the remaining edges on these paths and cycles. Note edges $uv$ of $\overline G$ where $u$ and $v$ both had degree 1 are not discarded.

Let $\solI$ be the resulting set of edges, viewed in the original graph $G$. Note that $\solI$ is feasible: any vertex labelled \texttt{s} already had its capacity satisfied by $\sola$ because $\delta_{\sola}(v) \subseteq S(v)$.
Any vertex labelled $\texttt{l}$ has at most one of its incident edges in $\sola$ chosen to stay in $\solI$.

Let $e = uv \in \sol$, we place a lower bound on ${\bf Pr}[e \in \solI]$ by analyzing a few cases.
\begin{itemize}
\item If $e \in S(u) \cap S(v)$, then ${\bf Pr}[e \in \solI] = {\bf Pr}[e \in \sola] = \alpha^2$.
\item If $e \in S(u) \cap L(v)$ or vice-versa then ${\bf Pr}[e \in \solI] = \alpha \cdot (1-\alpha)/2$ (note $e$ does not lie on a cycle in $\overline G$ since one endpoint is labelled \texttt{s}).
\item If $e \in L(u) \cap L(v)$ then ${\bf Pr}[e \in \solI] = (1-\alpha)^2/3$.
\end{itemize}
Choosing $\alpha = 2/5$, we have ${\bf E}[p(\solI)] \geq p(\sol) \cdot \frac{3}{25}$.
\end{proof}
We can efficiently derandomize this technique as follows. First, we use a pairwise independent family of random values to generate a probability space over labelings of $V$ with $O(|V|)$ events such that the
distribution of labels over pairs $u,v \in V$ is the same as with independently labelling the vertices. See Chapter 11 of \cite{RANDBOOK} for details of this technique.
For each such labelling, we decompose the paths and cycles of
$\overline G$ into matchings and keep the most profitable matching from each path and cycle instead of randomly picking one.

\begin{lemma}
For a conflict-free instance of \gdmm, we can find a feasible solution $\solI \sse \sol$ with $p(\solI) \geq \frac{p(\sol)}{4}$.
\end{lemma}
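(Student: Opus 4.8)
The plan is to reuse the random-labelling construction from the proof of Lemma~\ref{lem:app9}, but to exploit conflict-freeness so as to drop the path/cycle matching step entirely and to rebalance the probabilities. First I would label each $v \in V$ independently, placing label \texttt{s} with probability $\al = 1/2$ and label \texttt{l} with probability $1/2$, and let $\sola \subseteq \sol$ be the set of edges $e = uv$ that agree with the labelling at \emph{both} endpoints (i.e.\ $e \in S(u)$ when $u$ is \texttt{s} and $e \in L(u)$ when $u$ is \texttt{l}, and likewise for $v$), exactly as in Lemma~\ref{lem:app9}.

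The key observation is that $\sola$ is \emph{already} a feasible demand matching, for every labelling. At a vertex $v$ labelled \texttt{s} we have $\delta_{\sola}(v) \subseteq S(v)$, so $d_v(\delta_{\sola}(v)) \le d_v(S(v)) \le b_v$ by Lemma~\ref{lem:3bv}. At a vertex $v$ labelled \texttt{l} we have $\delta_{\sola}(v) \subseteq L(v)$, hence $|\delta_{\sola}(v)| \le 2$, and any two edges incident to $v$ together respect $b_v$ since the instance is conflict-free (and a single edge is feasible by our standing assumption). Thus we may simply output $\solI = \sola$; no ``shattering'' of \texttt{s} vertices or matching decomposition is needed, which is precisely what conflict-freeness buys us over the general case.

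It remains to bound ${\bf E}[p(\sola)]$. Fix $e = uv \in \sol$. If $e \in S(u)\cap S(v)$ then ${\bf Pr}[e\in\sola] = \al^2$; if $e$ lies in $S(\cdot)$ at one endpoint and $L(\cdot)$ at the other then ${\bf Pr}[e\in\sola] = \al(1-\al)$ (the two endpoints are labelled independently); and if $e \in L(u)\cap L(v)$ then ${\bf Pr}[e\in\sola]=(1-\al)^2$. With $\al = 1/2$ all three quantities equal $\tfrac14$ — this is exactly why $\al=1/2$ is the right choice here, as it equalises the cases — so linearity of expectation gives ${\bf E}[p(\sola)] \ge p(\sol)/4$. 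Finally, since the only randomness that matters for ${\bf Pr}[e \in \sola]$ is the joint distribution of the labels of the two endpoints of $e$, I would replace the independent labelling by a pairwise-independent distribution supported on $O(|V|)$ outcomes (Chapter~11 of \cite{RANDBOOK}), enumerate them, and return the feasible set $\sola$ of largest profit; this is legitimate because $\sola$ is feasible for \emph{every} labelling and its expected profit is at least $p(\sol)/4$.

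There is no serious obstacle here: the two points needing care are the verification that $\sola$ is feasible for every labelling (so that derandomisation is immediate) and the routine three-case probability computation, together with the fact that $\al=1/2$ is the optimal balance point rather than the value $2/5$ used in Lemma~\ref{lem:app9}.
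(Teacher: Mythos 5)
Your proof is correct and follows exactly the approach the paper uses: take $\sola$ from the proof of Lemma~\ref{lem:app9}, observe that conflict-freeness makes $\sola$ feasible as-is (no path/cycle decomposition needed), and set $\al = 1/2$ to get probability $1/4$ in each case. The paper's own proof states this in two sentences; you have merely spelled out the verification.
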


\begin{proof}
The set $\sola$ from the proof of Lemma \ref{lem:app9} is already feasible so it does not need to be pruned further. In this case, choose $\alpha = 1/2$.
\end{proof}

\begin{lemma}
If the given graph $G$ is bipartite, then we can find a feasible solution $\solI \sse \sol$ with $p(\solI) \geq \frac{p(\sol)}{7}$. 
\end{lemma}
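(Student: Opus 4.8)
The plan is to \emph{partition} $\sol$ into four parts according to the two-way classification of each endpoint of an edge into its ``large'' set $L(\cdot)$ or its ``small'' set $S(\cdot)$, prune each part separately into a feasible set that keeps at least half its profit, and then return the most profitable of the four parts. Write the bipartition as $V = A \cup B$. For $X,Y \in \{S,L\}$ let $\sol_{XY}$ be the set of edges $e = uv$ with $u \in A$, $v \in B$, $e \in X(u)$ and $e \in Y(v)$; then $\sol = \sol_{SS} \cup \sol_{SL} \cup \sol_{LS} \cup \sol_{LL}$ and these four parts are disjoint.

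First, $\sol_{SS}$ is already feasible: at every vertex $v$ its incident edges all lie in $S(v)$, and $d_v(S(v)) \le b_v$ by Lemma~\ref{lem:3bv}. Next I would prune $\sol_{SL}$: every $B$-vertex $v$ meets at most two of its edges (they all lie in $L(v)$, which has size at most two), while every $A$-vertex $u$ meets only edges lying in $S(u)$. Keeping, at each $B$-vertex, just its single most profitable incident $\sol_{SL}$-edge then yields a feasible set (each $B$-vertex now carries one edge, of demand at most $b_v$ by the theorem's hypothesis; each $A$-vertex still carries a subset of $S(u)$) of profit at least $\tfrac12 p(\sol_{SL})$. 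The part $\sol_{LS}$ is handled symmetrically, pruning at the $A$-vertices, for a feasible subset of profit at least $\tfrac12 p(\sol_{LS})$.

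The one place bipartiteness is genuinely needed is $\sol_{LL}$. Here every vertex on both sides meets at most two such edges, so $(V,\sol_{LL})$ has maximum degree at most $2$ and hence decomposes into paths and cycles; since $G$ is bipartite all these cycles are even, so the whole edge set is a union of two matchings. Keeping the more profitable matching gives a feasible set (a matching puts at most one edge, of demand $\le b_v$, at each vertex) of profit at least $\tfrac12 p(\sol_{LL})$. This is exactly where the general-graph pruning loses a factor of $3$ rather than $2$ — an odd cycle yields only a matching of a third of its profit — so ruling out odd cycles is the whole point of restricting to bipartite $G$.

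Finally I would output the best of these four feasible subsets. It is a subset of $\sol \in \mathcal I$, hence independent in $\cM$, and its profit is at least $\max\{p(\sol_{SS}), \tfrac12 p(\sol_{SL}), \tfrac12 p(\sol_{LS}), \tfrac12 p(\sol_{LL})\}$; calling this maximum $M$ we get $p(\sol) = p(\sol_{SS}) + p(\sol_{SL}) + p(\sol_{LS}) + p(\sol_{LL}) \le M + 2M + 2M + 2M = 7M$, i.e. $p(\solI) \ge p(\sol)/7$. The step I expect to take the most care is verifying feasibility of each pruned part in isolation and resisting the temptation to combine them — for instance $\sol_{SS}$ together with a pruned $\sol_{SL}$ can overload a $B$-vertex by stacking all of $S(v)$ with one extra large edge — so the argument must genuinely only take one best part; everything else is bookkeeping, and the whole procedure is deterministic and runs in polynomial time.
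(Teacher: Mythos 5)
Your proposal is correct and follows essentially the same strategy as the paper: partition $\sol$ into the four classes $\sol_{SS},\sol_{SL},\sol_{LS},\sol_{LL}$, observe $\sol_{SS}$ is already feasible by Lemma~\ref{lem:3bv}, recover at least half of each other class (using even-cycle matchability for $\sol_{LL}$), and output the best, giving the $7$. The only cosmetic difference is that for $\sol_{SL}$ and $\sol_{LS}$ you pick the most profitable $L$-side edge directly, while the paper routes both of these (like $\sol_{LL}$) through the $\overline G$/shattering construction and decomposes the resulting paths into two matchings — mathematically equivalent, since after shattering the small side those components are just paths of length at most two centered at the $L$-side vertices.
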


\begin{proof}
Say $V_L, V_R$ are the two sides of $V$. We first partition $\sol$ into 4 groups:
$$ \{uv \in \sol : uv \in S(u) \cap S(v)\} ~~~~~~\{uv \in \sol : uv \in L(u) \cap L(v)\} $$
$$ \{uv \in \sol : uv \in S(u) \cap L(v)\} ~~~~~~\{uv \in \sol : uv \in L(u) \cap S(v)\} $$
The first set is feasible. The latter three sets can each be partitioned into two feasible sets as follows. For one of these sets, form $\overline G$ as in the proof of Lemma \ref{lem:app9}. Each cycle can also be decomposed
into two matchings because $G$, thus $\overline G$, is bipartite. Between all sets listed above, we have partitioned $\sol$ into 7 feasible sets. Let $\solI$ be one with maximum profit.
\end{proof}

\begin{lemma}
For an arbitrary graph $G = (V,E)$ with a consistent ordering on edges, we can find a feasible demand matching $\solI \sse \sol$ with $p(\solI) \geq \frac{p(\sol)}{5}$.
\end{lemma}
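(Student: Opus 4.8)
The plan is to partition $\sol$ into $5$ feasible sets; returning the most profitable one then gives $\solI \subseteq \sol$ with $p(\solI) \ge p(\sol)/5 \ge \OPT\eqref{lp-m}/5$. For this I would recast feasibility in colouring terms: if I colour the edges of $\sol$ so that, for every $v$, each edge of $L(v)$ gets a colour shared by no other edge of $\delta_\sol(v)$, then every colour class $T$ is feasible. Indeed, at each $v$ either $\delta_T(v)$ contains an $L(v)$-edge, which then forces $\delta_T(v)$ to be that single edge (so $d_v(\delta_T(v)) \le b_v$ by the standing assumption $d_{v,e} \le b_v$), or $\delta_T(v) \subseteq S(v)$ (so $d_v(\delta_T(v)) \le d_v(\delta_{S(v)}(v)) \le b_v$).

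First I would fix a consistent ordering $\prec$ of $E$. Since $\prec$ restricted to $\delta(v)$ puts those edges in nondecreasing order of demand $d_{v,e}$, for each $v$ the set $L(v)$ is exactly the one or two $\prec$-largest edges of $\delta_\sol(v)$; write $L(v) = \{f_v, g_v\}$ with $f_v \prec g_v$. Two observations drive the argument: (i) if $e = uv \in S(u)$ then both $f_u$ and $g_u$ are $\prec$-larger than $e$; (ii) if $e \in L(u)$ then $g_u$ is the only edge of $\delta_\sol(u)$ that can be $\prec$-larger than $e$, and it too lies in $L(u)$, so at that moment no edge of $S(u)$ incident to $u$ has been handled yet.

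Then I would colour the edges of $\sol$ greedily, from the $\prec$-largest to the $\prec$-smallest, maintaining the invariant that at every $v$ the already-coloured edges of $L(v)$ carry pairwise distinct colours and no already-coloured edge of $S(v)$ shares a colour with an already-coloured $L(v)$-edge at $v$. When $e = uv$ is reached, the colours it must avoid at an endpoint $w$ are: the colours of the (at most two) already-coloured $L(w)$-edges when $e \in S(w)$, and the colour of the (at most one) already-coloured $L(w)$-edge when $e \in L(w)$ — this uses (i)--(ii), together with the remark that in the second case there are no coloured $S(w)$-edges at $w$ to clash with. Summing over the two endpoints, $e$ forbids at most $2 + 2 = 4$ colours (the tight case being $e \in S(u) \cap S(v)$), so among $5$ colours one is always free; assigning it to $e$ preserves the invariant. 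At termination every colour class is feasible by the first paragraph, and the whole procedure clearly runs in polynomial time once $\prec$ is available.

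The delicate point is the bookkeeping behind observation (ii) and the greedy step: I must check that an $L$-edge at a vertex $w$ really has at most one earlier-coloured neighbour at $w$ (namely $g_w$), so that only edges that are small at both endpoints can exhaust a full palette of $4$ colours — this is precisely what makes $5$ colours both necessary and sufficient here. Confirming that the invariant survives each of the cases $e \in S(u)\cap S(v)$, $e \in S(u)\cap L(v)$ (and its mirror), and $e \in L(u)\cap L(v)$ is then a routine check.
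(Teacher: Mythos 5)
Your proof is correct and follows essentially the same approach as the paper: process the edges of $\sol$ in decreasing order of the consistent ordering and greedily assign each edge $e=uv$ a colour (group) not used by any earlier-processed edge of $L(u)\cup L(v)$, noting that at most $|L(u)\cup L(v)|\le 4$ colours are forbidden so 5 suffice, with feasibility of each colour class following because any two same-coloured edges meeting a vertex $v$ must both lie in $S(v)$. The paper states the $\le 4$ bound directly via $|L(u)\cup L(v)|\le 4$ rather than itemizing the cases $S/S$, $S/L$, $L/L$ as you do, but the extra case analysis you provide is just a refinement and does not change the argument or the constant.
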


\begin{proof}
We partition $\sol$ into five groups in this case. Consider the edges in decreasing order of the consistent ordering. When edge $e = uv$ is considered, assign it to a group that does not include edges in $L(u) \cup L(v)$
that come before $e$ in the ordering. As $|L(u) \cup L(v)| \leq 4$, the edges can be partitioned into five groups this way.
Each group $\sola$ is a feasible demand matching since $\delta_\sola(v) \subseteq S(v)$ or $|\delta_\sola(v)| = 1$ for each vertex $v$.
Now let $\solI$ be the group with maximum profit, so $p(\solI) \geq \frac{p(\sol)}{5}$.
\end{proof}


\begin{lemma}\label{lem:dmm_small}
If $d_{v,e} \leq \epsilon \cdot b_v$ for each $v \in V, e \in \delta(v)$, we can find a feasible demand matching $\solI \subseteq \sol$ with $p(\solI) \geq (1-O(\epsilon^{1/3})) \cdot p(\sol)$.
\end{lemma}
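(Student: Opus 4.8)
The plan is to prune $\sol$ to a feasible solution by a randomized argument analogous to Lemma~\ref{lem:app9}, but exploiting the $\epsilon$-smallness of the edges to avoid losing a constant fraction of the profit. Recall that by Lemma~\ref{lem:3bv} (via $S(v) = \delta_\sol(v) - L(v)$) each vertex already satisfies $d_v(\delta_{S(v)}(v)) \le b_v$; the only overload comes from the at most two ``large'' edges in $L(v)$. So the natural strategy is: (i)~discard the large edges, i.e.\ keep only the subgraph $H = (V, \bigcup_v S(v))$ where every endpoint constraint is already respected; but this is not quite a subset of feasible edges because an edge $e = uv$ may lie in $S(u)$ but in $L(v)$. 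We instead randomly label each vertex $v$ as $\texttt{s}$ with probability $1-\theta$ and $\texttt{l}$ with probability $\theta$ for a small parameter $\theta = \theta(\epsilon)$ to be tuned; an edge ``survives'' if both its endpoints are labelled $\texttt{s}$-consistently (it lies in $S(\cdot)$ at an $\texttt{s}$-vertex) or $\texttt{l}$-consistently. The $\texttt{s}$-$\texttt{s}$ surviving edges automatically satisfy all their $\texttt{s}$-endpoints; we then need to additionally handle the overload contributed at $\texttt{l}$-vertices and across edges that are $S$ at one end and $L$ at the other.

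The key new idea for the small-demand regime is that we do not need to shatter the graph down to matchings; instead, at each vertex we can afford to keep \emph{most} of its incident surviving edges and only drop a small fraction to restore feasibility, because each individual edge uses at most an $\epsilon$ fraction of $b_v$. Concretely, after the labelling, at a vertex $v$ labelled $\texttt{l}$ the set of surviving incident edges all lie in $L(v)$, so there are at most $2$ of them and we keep them only if together they fit — but $|L(v)| \le 2$ and each has demand $\le \epsilon b_v$, so this is essentially free up to an $O(\epsilon)$ loss. The genuinely delicate vertices are the $\texttt{s}$-vertices that also have ``$S$-at-this-end-but-$L$-at-the-other'' edges surviving: at such a vertex the surviving edges need not all lie in $S(v)$. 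Here I would use a standard ``fit a near-maximal prefix'' / random-subset argument: order the surviving incident edges, and greedily keep a maximal prefix whose total demand is $\le b_v$; since each edge is $\epsilon$-small and the $S(v)$-part alone already fits within $b_v$, the expected profit discarded at $v$ is only an $O(\epsilon)$-fraction plus, crucially, the profit of the cross edges, whose total probability of surviving is itself $O(\theta)$. Balancing the $O(\theta)$ loss from cross-edges and $\texttt{l}$-vertices against the $O(\epsilon/\theta)$-type loss from the greedy pruning (an edge of demand up to $\epsilon b_v$ relative to residual capacity controlled by $\theta$) yields the optimal choice $\theta \asymp \epsilon^{1/3}$ and a total loss of $O(\epsilon^{1/3})$, matching the claimed $1 - O(\epsilon^{1/3})$.

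More carefully, the three error sources I expect are: (a)~an edge $uv$ with $u,v$ both $\texttt{s}$-consistent survives with probability $(1-\theta)^2 = 1 - O(\theta)$; (b)~an edge with one endpoint in $L$ survives only with probability $O(\theta)$, contributing an $O(\theta) \cdot p(\sol)$ additive loss; and (c)~at an $\texttt{s}$-vertex $v$, among the surviving edges the ones in $S(v)$ fit within $b_v$, but cross edges (those in $L$ at the \emph{other} endpoint, which can still be incident to $v$ via its $S(v)$ side) may push it slightly over, and since each edge has demand $\le \epsilon b_v$, dropping at most a few edges — or, better, a random $\Theta(\epsilon/\theta)$-fraction via independent thinning — restores feasibility with $O(\epsilon/\theta)$ expected relative loss. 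Summing, $\mathbf{E}[p(\solI)] \ge (1 - O(\theta) - O(\epsilon/\theta)) \cdot p(\sol)$, optimized at $\theta = \Theta(\epsilon^{1/2})$ giving $O(\epsilon^{1/2})$ — if that is all, the bound would be $1 - O(\epsilon^{1/2})$; to squeeze down to $\epsilon^{1/3}$ one must be cleverer about source (c), e.g.\ partition the $\texttt{s}$-side residual budget into $\Theta(1/\theta)$ ``slots'' and argue that overflow only occurs in the last slot, an event of probability $O(\theta)$, so the true loss from (c) is $O(\theta) + O(\epsilon/\theta^2)$, optimized at $\theta = \Theta(\epsilon^{1/3})$. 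Derandomization proceeds as in Lemma~\ref{lem:app9} using a pairwise-independent labelling family together with a deterministic greedy pruning per vertex.

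The main obstacle, as signalled above, is controlling error source (c): naively pruning to feasibility at the $\texttt{s}$-vertices costs $O(\epsilon/\theta)$, which is too weak, so the real work is a more refined argument — bucketing the capacity into $\Theta(1/\theta)$ bands and showing the overload concentrates in the final band with probability $O(\theta)$ — to get the loss down to $O(\theta + \epsilon/\theta^2)$ and hence the $\epsilon^{1/3}$ exponent. The rest of the proof (the labelling, the probabilities of the three edge types, handling $\texttt{l}$-vertices and cross edges, and derandomization) is routine given the techniques already developed for Lemma~\ref{lem:app9}.
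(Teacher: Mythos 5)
Your proposal does not match the paper's proof and has a genuine gap. The central problem is error source~(b): you assert that an edge with one endpoint in some $L(v)$ ``survives only with probability $O(\theta)$, contributing an $O(\theta)\cdot p(\sol)$ additive loss.'' The survival probability being $O(\theta)$ means the \emph{expected profit retained} from such edges is $O(\theta)\cdot p\bigl(\bigcup_v L(v)\bigr)$, so the \emph{loss} is $\bigl(1-O(\theta)\bigr)\cdot p\bigl(\bigcup_v L(v)\bigr)$ --- essentially the entire profit of the $L$-edges. In the $\epsilon$-small regime there is no reason for $p\bigl(\bigcup_v L(v)\bigr)$ to be small: take a star with center $v$ of degree $1/\epsilon$ where all demands equal $\epsilon b_v$; the two $L(v)$-edges are designated ``large'' only by tie-breaking and may carry a constant fraction of the profit. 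So a labelling scheme in which $\texttt{l}$ has probability $\theta=o(1)$ cannot yield a $\bigl(1-o(1)\bigr)$-approximation. In short: the $L$/$S$ split that is useful for the constant-factor bounds in the other parts of Theorem~\ref{thm:main} is precisely the wrong dichotomy here, because in the small-edge regime you cannot afford to lose the $L$-edges at all. Compounding this, the ``cleverer bucketing'' step you flag as necessary to get from $\epsilon^{1/2}$ down to $\epsilon^{1/3}$ is never actually supplied, so even granting (b) the proposal would only establish the weaker $\epsilon^{1/2}$ exponent.

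The paper's argument avoids the $L$/$S$ split entirely. It uses the fact that $\sol$ already satisfies $d_v(\delta_\sol(v)) \le (1+2\epsilon) b_v$ and proceeds in two steps: independently discard each edge of $\sol$ with probability $\delta = \epsilon^{1/3}$, then greedily add edges back one at a time keeping only those that preserve feasibility. For a fixed edge $e$ and endpoint $v$, let $\mathbf{D}^e_v$ be the sampled demand at $v$ from the other edges; then $\mathbf{E}[\mathbf{D}^e_v] \le (1-\delta/2) b_v$ and $\mathbf{Var}[\mathbf{D}^e_v] = O(\epsilon b_v^2)$ since each term is bounded by $\epsilon b_v$. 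Chebyshev gives $\Pr\bigl[\mathbf{D}^e_v > (1-\epsilon) b_v\bigr] = O(\epsilon/\delta^2)$, and a union bound over the two endpoints shows $e$ is dropped with probability $O(\delta + \epsilon/\delta^2) = O(\epsilon^{1/3})$. This second-moment calculation is exactly the concentration step your bucketing sketch is gesturing at, but it is applied directly to a uniform subsample of $\sol$ rather than on top of the labelling. If you want to salvage your plan, you should abandon the labelling entirely and move directly to the subsample-and-prune analysis.
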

This is proven using a common randomized pruning procedure.
See, for example, \cite{CCKR11} for a similar treatment in another packing problem.
\begin{proof}
As the bound is asymptotic, we assume $\epsilon$ is sufficiently small for the bounds below to hold. 
Recall that under the assumption of the lemma that $\sol$ satisfies $p(\sol) \geq \OPT\eqref{lp-m}$ and $d_v(\delta_{\sol}(v)) \leq b_v + 2 \cdot \max_{e \in \delta(v)} d_{v,e} \leq (1+2\epsilon) \cdot b_v$.

Let $\delta = \epsilon^{1/3}$.
We initially let $\sola$ be a subset of $\sol$ by independently adding each $e \in \sol$ to $\sola$ with probability $1-\delta$. We then prune $\sola$ to a feasible set $\solI$ as follows.

Process each the edges of $\sola$ in arbitrary order. When considering a particular $e \in \sola$, add $e$ to $\solI$ only if $\solI \cup \{e\}$ is feasible. We have
$${\bf Pr}[e \not\in \solI] = {\bf Pr}[e \not\in \sola] + {\bf Pr}[e \not\in \solI | e \in \sola] \cdot {\bf Pr}[e \in \sola] \leq \delta + {\bf Pr}[e \not\in \solI | e \in \sola],$$
we proceed to bound the last term.
For each endpoint $v$ of $e$, 
consider the random variable ${\bf D}^e_v = d_v(\delta_\sola(v) - \{e\})$. If $e \not\in \solI$ yet $e \in \sola$, then an endpoint $v$ of $e$ has ${\bf D}^e_v > (1-\epsilon) \cdot b_v$.
As this event is independent of $e \in \sola$, it suffices to bound $\Pr[{\bf D}^e_v > (1-\epsilon)\cdot b_v]$.

Note $$\mu_v := {\bf E}[{\bf D}^e_v] = (1-\delta)\cdot d_v(\delta_{\sol}(v)-\{e\}) \leq (1-\delta) \cdot (1+2\epsilon) \cdot b_v \leq (1-\delta/2) \cdot b_v.$$
For brevity, let $\sigma_v := {\bf Var}[{\bf D}^e_v]$ and for $e' \in \sol$ let ${\bf X}_{e'}$ be the random variable indicating $e' \in \sola$. As the edges are added to $\sola$ independently,
\begin{eqnarray*}
\sigma_v & = & \sum_{e' \in \delta_{\sol}(v)-\{e\}} {\bf Var}[d_{v,e'} {\bf X}_{e'}] \leq \sum_{e' \in \delta_{\sol}(v)-\{e\}} {\bf E}[(d_{v,e'} {\bf X}_{e'})^2]  \\
 & \leq & \sum_{e' \in \delta_{\sol}(v)-\{e\}} d_{v,e'}^2 \leq \sum_{e' \in \delta_{\sol}(v)-\{e\}} d_{v,e'} \cdot (\epsilon \cdot b_v) \leq\epsilon(1+2\epsilon) \cdot b_v^2 \leq 2\epsilon \cdot b_v^2
\end{eqnarray*}
Chebyshev's inequality states ${\bf Pr}(|{\bf D}^e_v - \mu_v| \geq a) \leq \frac{\sigma_v}{a^2}$ for any $a > 0$. Using $a = \delta \cdot (1-2\epsilon)  b_v$
and, for sufficiently small $\epsilon$, the fact that
$$\mu_v + \frac{\delta}{3} \cdot b_v \leq (1-\delta/6)\cdot b_v \leq (1-\epsilon) \cdot b_v$$
we see
\begin{eqnarray*}
{\bf Pr}[{\bf D}^e_v > (1-\epsilon)\cdot b_v] & \leq & {\bf Pr}[|{\bf D}^e_v - \mu_v| > \frac{\delta}{3}\cdot b_v] \\
 & \leq & \frac{3^2}{\delta^2}\cdot\frac{2\epsilon \cdot b_v^2}{b_v^2} = \frac{18\epsilon}{\delta^2}.
\end{eqnarray*}

Finally, using the union bound over both endpoints of $e$, we have ${\bf Pr}[e \not\in \solI] \leq \delta + \frac{36\epsilon}{\delta^2} = 37 \epsilon^3$.
That is, ${\bf E}[p(\solI)] \geq (1-O(\epsilon^{1/3})) \cdot p(\sol).$
\end{proof}
This analysis only uses the second moment method, so it can derandomize efficiently using a pairwise-independent family of random variables. Again, see Chapter 11 of \cite{RANDBOOK} for a discussion of this technique.



\section{Strong NP-Hardness of \dm for Simple, Bipartite Planar Graphs}\label{app:hard}
\begin{proof}[Proof of Theorem \ref{thm:planar_hard}]
For an instance $\Phi = (X,C)$ of \textsc{SAT} with variables $X$ and clauses $C$, let $G_\Phi$ denote the graph with vertices $X \cup C$ and edges connecting $x \in X$ to $c \in C$ if $x$ appears in $c$
(either positively or negatively).
Consider the restriction of \textsc{SAT} to instances $\Phi$ where $G_\phi$ is planar and can be drawn such that for every clause $x \in X$, the vertices for clauses $c$ that contain the positive literal $x$
appear consecutively around $x$ (thus, so do the vertices for clauses containing the negative literal $\overline{x}$).
Such instances were proven to be NP-hard in \cite{L82}. The reduction in \cite{L82} reduces from an arbitrary planar SAT instance and it is clear from the reduction
that if we reduce from bounded-degree planar SAT, then the resulting SAT instance also has bounded degree.

So, let $\Phi = (X,C)$ be an instance of planar sat where each vertex in $G_\Phi$ has degree at most some universal constant $D$ and for each $x \in X$ the edges connecting $x$ to clauses $c$ that contain
the positive literal $x$ appear consecutively around $x$.

Our \dm instance has vertices $\{u_c : c \in C\} \cup \{t_x, f_x, v_x : x \in X\}$ and the following edges.
\begin{itemize}
\item for each $x \in X$, two edges $v_xt_x$ and $v_xf_x$, both with demand and profit $D$.
\item for each $x \in X$, a unit demand/profit edge $u_ct_x$ for every $c \in C$ including $x$ negatively.
\item for each $x \in X$, a unit demand/profit edge $u_cf_x$ for every $c \in C$ including $x$ positively.
\end{itemize}
Each vertex $u_c$ for clauses $c \in C$ has capacity 1 and $t_x, f_x, v_x$ all have capacity $D$ for all $x \in X$. See Figure \ref{fig:reduction} for an illustration. Note the resulting graph is bipartite,
with $\{t_x, f_x : x \in X\}$ forming one side of the bipartition.

We claim $\Phi$ is satisfiable if and only if the optimum \dm solution has value $D \cdot |X| + |C|$.
Suppose $\Phi$ is satisfiable. For each $x \in X$, select edge $t_xv_x$ if $x$ is \texttt{true} in the satisfying assignment, otherwise select $f_xv_x$. As each $c \in C$ is satisfied, some literal in $c$ is satisfied.
Select the corresponding incident edge.

Conversely, consider an optimal demand matching solution $F$. Without loss of generality, we may assume $|F \cap \{t_xv_x, f_xv_x\}| = 1$ for each $x \in X$. Indeed, because of the capacity of $v_x$ we cannot choose both.
If neither is chosen, then $F' = (F - \delta(t_x)) \cup \{v_xt_x\}$ is also feasible and has no smaller value. The value of $F$ is then $D \cdot |X|$ plus the number of edges of $F$ incident to some $u_c, c \in C$.
Consider the truth assignment that assigns $x$ \texttt{true} if $t_xv_x \in F$ and \texttt{false} if $f_xv_x \in F$. Then some edge incident to $c$ can be in $F$ if and only if this truth assignment satisfies $c$.
\end{proof}

\begin{figure}
\begin{center}
\includegraphics[width=0.8\textwidth]{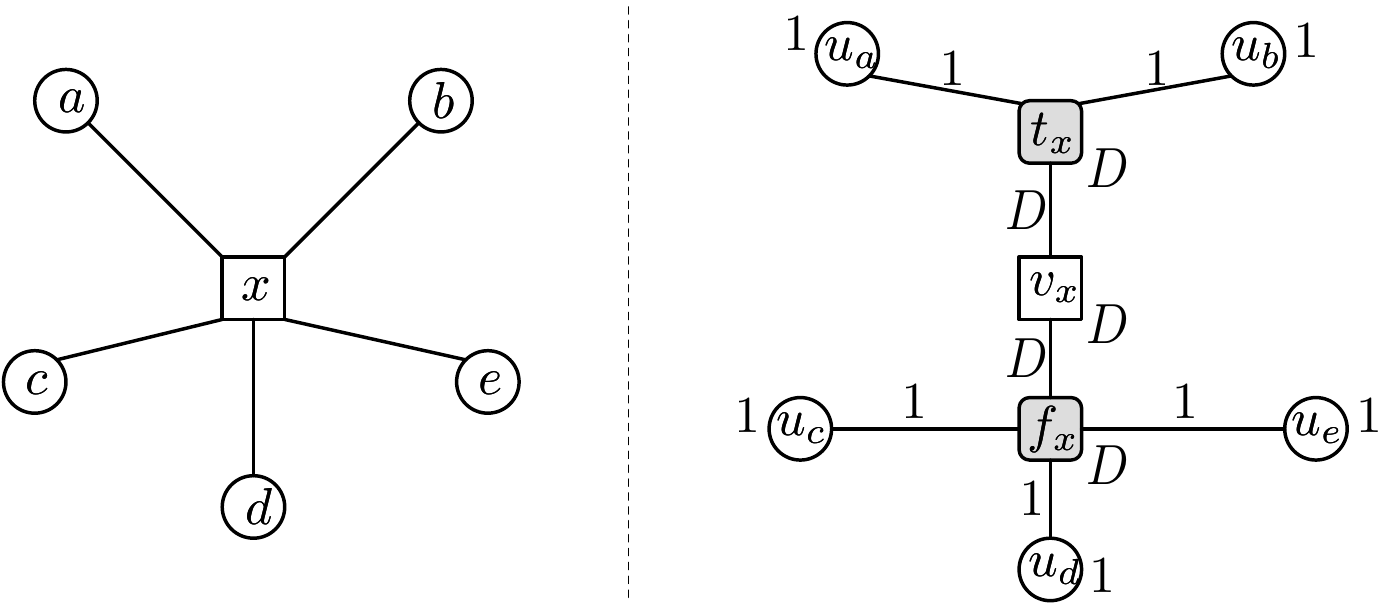}
\end{center}
\caption{
{\bf Left}: Part of the graph $G_\Phi$ for an instance of planar \textsc{SAT}. Here, $x$ is a variable that appears {\em negatively} in clauses $a,b$ and {\em positively} in clauses $c,d,e$.
In particular, not that all positive occurrences of $x$ appear consecutively around the vertex for $x$.
{\bf Right}: The corresponding vertices in the \dm instance. The numbers indicate the vertex capacities and the edge demands and profits. Here $d$ is the maximum degree of a vertex variable
in $G_{\Phi}$, which may be regarded as a constant. The shading of the vertices $t_x, f_x$ illustrates that the graph in the \dm instance is bipartite.
}
\label{fig:reduction}
\end{figure}

\section{Demand Matching in Excluded-Minor Families}\label{sec:minor}

In this section we prove \gdms admits a PTAS in graphs that exclude a fixed graph as a minor. Our proof of Theorem \ref{thm:planar_hard} (the {\bf NP}-hardness) appears in the full version.
Throughout we let $\OPT$ denote the optimum solution value to the given \gdms instance.

%
%
%



Let $H$ be a graph and let $\mathcal G_H$ be all graphs that exclude $H$ as a minor. Our PTAS uses the following decomposition.
\begin{theorem}[Demaine, Hajiaghayi, and Kawarabayashi \cite{EHK05}]\label{thm:exclude}
There is a constant $c_H$ depending only on $H$ such that for any $k$ and any $G \in \mathcal G_H$, the vertices $V$ of $G$ can be partitioned into $k+1$ disjoint sets so that the union of any $k$ of these sets
induce a graph with treewidth bounded by $c_H \cdot k$. Such a partition can be found in time that is polynomial in $|V|$.
\end{theorem}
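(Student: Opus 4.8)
The plan is to reduce to the Robertson--Seymour Graph Minor Structure Theorem and then run a Baker/Eppstein-style BFS-layering argument on each piece of the resulting decomposition. Since $G \in \mathcal G_H$, there is a constant $h = h(H)$ and a tree decomposition $(T,\{X_t\}_{t \in V(T)})$ of $G$ of adhesion at most $h$ such that every \emph{torso} (a bag $X_t$ together with edges added to make each adhesion set a clique) is $h$-almost-embeddable: after deleting at most $h$ ``apex'' vertices, the remainder embeds in a surface of Euler genus at most $h$ with at most $h$ vortices, each of width at most $h$. For fixed $H$ such a decomposition is computable in polynomial time. I would first prove the statement for a single $h$-almost-embeddable graph and then stitch the per-torso partitions together along $T$.

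\textbf{Single almost-embeddable graph $G_0$.} Set aside the at most $h$ apex vertices $A$. On $G_0 - A$ (ignoring vortex interiors) add a dummy root adjacent to one vertex of each component, run BFS, and let $\ell(v)$ be the BFS level of $v$; put $v$ into part $P_i$ when $\ell(v) \equiv i \pmod{k+1}$. The key fact is that any $k$ consecutive BFS levels of a genus-$\le h$ graph induce a subgraph of treewidth $O_H(k)$ (the bounded-genus analogue of ``$k$-outerplanar implies treewidth $O(k)$'', obtained e.g.\ by cutting the surface along $O(h)$ non-contractible curves to planarize, then applying the planar case). Hence deleting a single residue class $P_i$ splits $G_0 - A$ into components each spanning at most $k$ consecutive levels, so of treewidth $O_H(k)$. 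Vortices (bounded width, attached along faces) are folded into the layering by slicing each vortex's bounded-pathwidth ladder structure along the levels of its attachment face, which raises the treewidth bound by only $O(h)$. Finally dump $A$ into $P_0$: the union of any $k$ of the $P_i$ is a subgraph of $G_0 - A$ of treewidth $O_H(k)$, plus at most $h$ extra vertices when $P_0$ is included, which raises treewidth by at most $h$. Either way the bound is $O_H(k)$.

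\textbf{Globalization.} Root $T$ and process torsos top-down. For each vertex $v$, assign it a part when processing the topmost bag containing $v$; by the connected-subtree property of tree decompositions this is well defined and consistent, and the adhesion set $S_t$ shared with the parent has already been assigned by the time we reach $t$. When we reach torso $G_t$, run the almost-embeddable construction on $G_t$ but override the part-assignment on $S_t$ to agree with what was chosen earlier. Since $|S_t| \le h$, forcing the parts of these $\le h$ vertices costs only $O(h)$ in treewidth: simply add $S_t$ to every bag of the $O_H(k)$-width tree decomposition of $G_t$'s selected parts. This yields one partition $V = P_0 \cup \dots \cup P_k$. To bound the treewidth of the union of any $k$ of the global parts: restricted to each torso this is a union of $k$ of that torso's parts, hence admits a tree decomposition of width $O_H(k)$ with all relevant adhesion sets appearing in every bag; gluing these decompositions along $T$ — exactly how treewidth composes across a bounded-adhesion tree decomposition — gives a tree decomposition of the whole $k$-part union of width $O_H(k)$. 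Taking $c_H$ to absorb all constants depending on $h(H)$ finishes the proof, and the running time is polynomial because the structure decomposition is polynomial-time for fixed $H$, BFS and the layering are near-linear, and the gluing is a single pass over $T$.

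\textbf{Main obstacle.} The result genuinely rests on the Graph Minor Structure Theorem and its algorithmic version; granting that, the remaining work is careful bookkeeping. Within the self-contained part, the delicate step is handling vortices in the BFS layering: a vortex does not respect BFS levels, so it must be sliced by hand, and the crux is getting its contribution to the treewidth down to $O_H(1)$ rather than something that grows with $k$. The apex vertices, by contrast, are easy since there are only $O_H(1)$ of them, and the bounded-genus core is a routine lift of Baker's planar technique.
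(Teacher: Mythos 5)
The paper does not actually prove Theorem \ref{thm:exclude}: it is imported verbatim from Demaine, Hajiaghayi, and Kawarabayashi \cite{EHK05} and used as a black box, so there is no in-paper argument to compare against. Your outline is, in essence, the strategy of that cited work: the Robertson--Seymour decomposition into clique-sums of almost-embeddable torsos, a Baker/Eppstein layering modulo $k+1$ on the surface part of each torso, absorption of the $O_H(1)$ apices and the width-$O_H(1)$ vortices at an additive cost, and a top-down pass over the clique-sum tree that forces consistency on the bounded-size adhesion sets and glues the per-torso decompositions. So you are on the intended route rather than a different one. Be aware, however, that your write-up leaves the genuinely hard ingredients at the level of assertion: the (algorithmic) structure theorem itself, the bounded-genus local-treewidth fact, and especially the vortex step -- you must specify which part each vortex vertex receives and verify that a slice of $k$ consecutive BFS levels meets each vortex in a set covered by $O_H(1)$ consecutive bags of its path decomposition (using that consecutive attachment vertices on a face differ by at most one level). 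Also, when gluing you should bound the treewidth of the kept vertices in the \emph{torso}, including the virtual clique edges on adhesion sets, not merely in $G$; this is what guarantees each (kept part of an) adhesion set lies in a bag of both decompositions so the clique-sum composition of tree decompositions goes through, and it is compatible with your trick of adding the overridden adhesion vertices to all bags at cost $O(h)$. These are exactly the bookkeeping points carried out in \cite{EHK05}; modulo citing them, your sketch is consistent with the known proof.
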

Using this decomposition in a standard way, we get a PTAS for \gdms when $G \in \mathcal G_H$ if we have a PTAS for \gdms in bounded-treewidth graphs.

Algorithm \ref{alg:ptas} summarizes the steps to reduce \gdms from bounded-genus graphs to bounded-treewidth graphs. We also note that the tree decomposition itself can be executed in polynomial
time (e.g. \cite{B96}) since $c_H \cdot k$ is regarded as a constant.
\begin{algorithm}
\caption{High-Level algorithm for the \gdms PTAS for graphs excluding $H$ as a minor} \label{alg:ptas}
\begin{algorithmic}
\State $k \gets 3/\epsilon$
\State let $\pi \gets \{V_1, V_2, \ldots, V_{k+1}\}$ be a partition of $V$ as in Theorem \ref{thm:exclude}
\For{each $V_i \in V$}
\State compute a tree decomposition of $G[V-V_i]$ with treewidth $c_H \cdot k$
\State use a $(1-\epsilon/3)$-approximation on $G[V-V_i]$ to get a demand matching solution $\solI_i$
\EndFor
\State \Return the best solution $\solI_i$ found
\end{algorithmic}
\end{algorithm}

\begin{theorem}
For any constant $\epsilon > 0$ and any fixed graph $H$, Algorithm \ref{alg:ptas} runs in polynomial time and is a $(1-\epsilon)$-approximation.
\end{theorem}
\begin{proof}
That the algorithm runs in polynomial time is clear given that $H$ and $\epsilon$ are regarded as constant and the fact that the PTAS for bounded-treewidth graphs runs in polynomial time for constant treewidth and $\epsilon$.

Each edge $e$ is excluded from $G[V-V_i]$ for at most 2 parts $V_i$ of the partition $\pi$. Thus, the optimum \gdms solution for some $G[V-V_i]$ has profit at least $(1-2/(k+1)) \cdot \OPT \geq (1- 2\epsilon/3) \cdot \OPT$.
So the returned solution has value at least $(1-\epsilon/3) \cdot (1-2\epsilon/3) \cdot \OPT \geq (1-\epsilon) \cdot \OPT$.
\end{proof}

Intuition for our approach is given at the end of Section \ref{sec:results}.
We assume, for simplicity, that all $d_{v,e}$-values are distinct so we can naturally speak of {\em the} largest demands in a set. This is without loss of generality, we could scale demands and capacities by a common value so they
are integers and then subtract $\frac{2i + j}{3|E|^2}$ from the $j$'th endpoint of the $i$'th edge according to some arbitrary ordering. Such a perturbation does not change feasibility of solutions
as the total amount subtracted from all edges is $< 1$.


\subsection{A Sparsification Lemma}

We present our sparsification lemma, which even holds for general instances of \gdmm. 
\begin{lemma}[Sparsification Lemma]\label{lem:sparse}
For each $\epsilon > 0$ there is a feasible solution $\solI \subseteq E$ with the following properties.
\begin{itemize}
\item $p(\solI) \geq (1-2\epsilon) \cdot \OPT$
\item for each $v \in V$, there is some $M_v \subseteq \solI$ with $|M_v| \leq 1/\epsilon^2$ such that 
$d_{v,e} \leq \epsilon \cdot (b_v - d_v(\delta_{M_v}(v)))$ for all $e \in \delta_{\solI-M_v}(v)$
\end{itemize}
\end{lemma}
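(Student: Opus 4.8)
The plan is to obtain $\solI$ from an optimal feasible solution $\iopt\subseteq E$ (so $p(\iopt)=\OPT$) by deleting, at each vertex, one cheap ``band'' of edges. Assume $1/\epsilon$ is a positive integer. For each $v$, list the edges of $\delta_{\iopt}(v)$ as $e^v_1,e^v_2,\ldots$ in order of non-increasing demand $d_{v,\cdot}$, and for $j\in\{0,1,\ldots,1/\epsilon-1\}$ set $B^v_j=\{e^v_i : j/\epsilon<i\le (j+1)/\epsilon\}$ (truncated if $\delta_{\iopt}(v)$ runs out). These $1/\epsilon$ pairwise-disjoint bands are exactly the $1/\epsilon^2$ highest-demand edges at $v$. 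For each $v$ independently I would choose $j_v$ minimizing $p(B^v_{j_v})$, put $\mathcal D=\bigcup_v B^v_{j_v}$, and let $\solI=\iopt\setminus\mathcal D$; this is feasible and lies in $\cI$ since $\iopt$ is, and $\cI$ is downward closed.

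To verify the sparsification property at a fixed $v$, take $M_v:=\{e^v_1,\ldots,e^v_{j_v/\epsilon}\}\cap\solI$, so $|M_v|\le j_v/\epsilon\le 1/\epsilon^2$. Every $e\in\delta_{\solI}(v)\setminus M_v$ equals some $e^v_i$ with $i>(j_v+1)/\epsilon$: a surviving edge of rank $\le j_v/\epsilon$ would lie in $M_v$, and an edge of rank in $(j_v/\epsilon,(j_v+1)/\epsilon]$ lies in the deleted band $B^v_{j_v}$. Then, using feasibility of $\iopt$ (which gives $\sum_{i\le (j_v+1)/\epsilon}d_{v,e^v_i}\le b_v$) and that $B^v_{j_v}$ has $1/\epsilon$ edges each of demand at least $d_{v,e^v_{(j_v+1)/\epsilon}}$,
$$b_v-d_v(\delta_{M_v}(v))\ \ge\ b_v-\sum_{i\le j_v/\epsilon}d_{v,e^v_i}\ \ge\ \sum_{e\in B^v_{j_v}}d_{v,e}\ \ge\ \frac{1}{\epsilon}\,d_{v,e^v_{(j_v+1)/\epsilon}},$$
so $d_{v,e^v_i}\le d_{v,e^v_{(j_v+1)/\epsilon}}\le\epsilon\bigl(b_v-d_v(\delta_{M_v}(v))\bigr)$ for all $i>(j_v+1)/\epsilon$, which is exactly what is required (when a band is truncated, $\delta_{\solI}(v)\setminus M_v$ is empty and the property holds vacuously). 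The point to stress is that this only used $v$'s own choice $j_v$: deleting an edge because of its other endpoint only removes elements from $M_v$ and enlarges the residual $b_v-d_v(\delta_{M_v}(v))$, which cannot hurt. So the $j_v$ genuinely may be chosen independently per vertex.

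For the profit guarantee, the $1/\epsilon$ bands at $v$ are disjoint subsets of $\delta_{\iopt}(v)$, so $\sum_j p(B^v_j)\le p(\delta_{\iopt}(v))$ and hence the cheapest one satisfies $p(B^v_{j_v})\le\epsilon\,p(\delta_{\iopt}(v))$. Summing over $v$ and using that each edge has two endpoints, $p(\mathcal D)\le\sum_v p(B^v_{j_v})\le\epsilon\sum_v p(\delta_{\iopt}(v))=2\epsilon\,p(\iopt)$, so $p(\solI)\ge(1-2\epsilon)\OPT$.

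The main thing to get right is calibrating band size against band count so both requirements hold at once: a band of $1/\epsilon$ edges forces its smallest demand below an $\epsilon$-fraction of its total demand --- which, by feasibility, is at most the residual capacity left after the core --- yielding the $\epsilon$-smallness; and having exactly $1/\epsilon$ bands simultaneously keeps the core under $1/\epsilon^2$ and, by averaging, keeps the deleted profit charged to each endpoint under an $\epsilon$-fraction. A secondary but clarifying observation --- one might at first expect a coordinated shifting argument or a derandomization step to be needed here --- is that the sparsification condition at a vertex depends only on that vertex's deleted band, which is precisely why each vertex can be handled in isolation by greedily deleting its cheapest band.
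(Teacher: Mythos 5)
Your proof is correct, and it takes a genuinely different route from the paper's. The paper's argument is probabilistic: at each $v$ with at least $1/\epsilon^2$ incident edges in an optimum solution $\iopt$, it lets $L_v$ be the top $1/\epsilon^2$ edges by $d_v$-demand, lets $R_v$ be a \emph{uniformly random} subset of $L_v$ of size $1/\epsilon$, and deletes $\bigcup_v R_v$. Since each edge lands in $R_u$ or $R_v$ with probability at most $\epsilon$ each, the expected retained profit is at least $(1-2\epsilon)\OPT$, which yields existence; the sparsification condition follows for every realization, because $R_v$, $M_v=\solI\cap L_v$, and any surviving small edge $e$ at $v$ are disjoint pieces of $\delta_{\iopt}(v)$ and every $e'\in R_v$ dominates $d_{v,e}$. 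Your approach replaces the random $1/\epsilon$-subset of $L_v$ by a deterministic \emph{band decomposition}: partition the top $1/\epsilon^2$ edges at $v$ into $1/\epsilon$ consecutive bands of $1/\epsilon$ edges (in decreasing-demand order) and greedily delete the cheapest band. The averaging argument $\min_j p(B^v_j)\le \epsilon\, p(\delta_{\iopt}(v))$ plays the role of the expectation bound, and the $\epsilon$-smallness argument is structurally the same: the deleted band $B^v_{j_v}$ sits between the retained core $M_v$ and the surviving small edges, it has $1/\epsilon$ members each dominating the small demands, and its total demand fits in $b_v - d_v(\delta_{M_v}(v))$ by feasibility of $\iopt$. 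You correctly identify the key independence property (a deletion at $u$ can only shrink $M_v$, which only helps at $v$), and you correctly dispatch the boundary cases where bands are empty or truncated, for which $\delta_{\solI}(v)\setminus M_v=\emptyset$. Net difference: your argument is deterministic and constructive, whereas the paper's uses the probabilistic method; since the lemma is used only existentially (the dynamic program optimizes over all $\epsilon$-relaxed solutions), both are adequate, but your version is arguably cleaner in that it avoids the expectation step entirely.
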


Think of $M_v$ as the ``large'' edges in $\delta_\solI(v)$ and $\delta_{\solI-M_v}(v)$ as the ``small'' edges in $\delta_\solI(v)$. Note that some $e\in \solI$ may be designated large on one endpoint and small on the other.
\begin{proof}
Let $\solI^*$ be an optimum solution. For each $v \in V$, if $|\delta_{\solI^*}(v)| \geq 1/\epsilon^2$  then 
let $L_v$ be the $1/\epsilon^2$ edges in $\delta_{\solI^*}(v)$ with greatest $d_v$-demand and $R_v$ be a random subset of $L_v$ of size $1/\epsilon$.
If $|\delta_{\solI^*}(v)| < 1/\epsilon^2$, simply let $L_v = \delta_{\solI^*}(v)$ and $R_v = \emptyset$.


Set $\solI = \solI^* - \cup_{v \in V}R_v$ and for each $v \in V$ set $M_v = \solI \cap L_v$.
Clearly $\solI$ is feasible as it is a subset of the optimum solution.
For each $e = uv \in \solI^*$, $e$ lies in $R_u$ or $R_v$ with probability at most $\epsilon$ each, so ${\bf Pr}[e \not\in \solI] \leq 2\epsilon$. Thus, ${\bf E}[p(\solI)] \geq (1-2\epsilon) \cdot \OPT$.

Now we focus on proving the second property for $\solI$. Let $v$ be an arbitrary vertex in $V$. By construction $|M_v| \leq |L_v| \leq 1/\epsilon^2$. If $|R_v| = 0$ then $\delta_{\solI-M_v}(v) = \emptyset$, otherwise, $|R_v| = 1/\epsilon$ and for each remaining $e \in \delta_{\solI-M_v}(v)$, we note that
$d_{v,e} + d_v(\delta_{M_v}(v)) + \sum_{e' \in R_v} d_{v,e'} \leq b_v$
because the terms represent a subset of edges of $\solI^*$ incident to $v$.
Rearranging and using the fact that $d_{v,e'} \geq d_{v,e}$ for any $e' \in R_v$ shows $\frac{1}{\epsilon} \cdot d_{v,e} \leq b_v - d_v(\delta_{M_v}(v))$.
\end{proof}

This motivates the following notion of a relaxed solution.
\begin{definition}
An $\epsilon$-relaxed solution is a subset $\solI \subseteq E$ along with sets $M_v \subseteq \delta_\solI(v)$ with $|M_v| \leq 1/\epsilon^2$ for each $v \in V$ such that the following hold.
First, let $\overline b_v = b_v - d_v(\delta_{M_v}(v))$ for each $v \in V$.
Next, for each $e \in \delta_{\solI-M_v}(v)$, let $d'_{v,e}$ be the value of $d_{v,e}$ rounded down to the nearest integer multiple of $\frac{\epsilon}{|E|} \overline b_v$.
Then the following must hold:
\begin{itemize}
\item {\bf Large Edge Feasibility}: $d_v(\delta_{M_v}(v)) \leq b_v$ for each $v \in V$.
\item {\bf Small Edges}: $d_{v,e} \leq \epsilon \overline b_v$ for each $v \in V$ and each $e \in \delta_{\solI-M_v}(v)$.
\item {\bf Discretized Small Edge Feasibility}: $d'_v(\delta_{\solI-M_v}(v)) \leq \overline{b_v}$ for each $v \in V$
\end{itemize}
\end{definition}
The set $\solI$ in an $\epsilon$-relaxed solution is not necessarily a feasible \gdms solution under the original demands $d$. As we will see shortly, it can be pruned to get a feasible solution without losing much value.
Note the scaling from $d$ to $d'$ for some of the edges $e$ in the definition is done independently for each endpoint of $e$: the demand at different endpoints may be shifted down by different amounts.

Sometimes we informally say just a set $\solI \subseteq E$ itself is an $\epsilon$-relaxed solution even if we do not explicitly mention the corresponding $M_v$ sets.
\begin{lemma}
Let $\solI$ be an $\epsilon$-relaxed solution with maximum possible value $p(\solI)$. Then $p(\solI) \geq (1-2\epsilon) \cdot OPT$.
\end{lemma}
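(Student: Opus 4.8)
The plan is to exhibit a \emph{single} $\epsilon$-relaxed solution whose value is at least $(1-2\epsilon)\cdot\OPT$; since the statement concerns the maximum-value $\epsilon$-relaxed solution, the bound then follows at once. The natural candidate is the solution $\solI$ produced by the Sparsification Lemma (Lemma~\ref{lem:sparse}), together with the sets $M_v$ it furnishes.

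First I would recall exactly what Lemma~\ref{lem:sparse} provides: a feasible $\solI \subseteq E$ with $p(\solI) \geq (1-2\epsilon)\cdot\OPT$ and, for every $v \in V$, a set $M_v \subseteq \delta_\solI(v)$ with $|M_v| \leq 1/\epsilon^2$ satisfying $d_{v,e} \leq \epsilon\,(b_v - d_v(\delta_{M_v}(v)))$ for all $e \in \delta_{\solI-M_v}(v)$. I would then verify the three defining conditions of an $\epsilon$-relaxed solution for this pair $(\solI,\{M_v\}_{v\in V})$, writing $\overline b_v = b_v - d_v(\delta_{M_v}(v))$ as in the definition. Large Edge Feasibility is immediate: since $M_v \subseteq \delta_\solI(v)$ and $\solI$ is feasible, $d_v(\delta_{M_v}(v)) \leq d_v(\delta_\solI(v)) \leq b_v$. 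The Small Edges condition $d_{v,e} \leq \epsilon\,\overline b_v$ for $e \in \delta_{\solI-M_v}(v)$ is verbatim the second conclusion of Lemma~\ref{lem:sparse}.

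For Discretized Small Edge Feasibility I would use only that each $d'_{v,e}$ is $d_{v,e}$ rounded \emph{down} (to a multiple of $\tfrac{\epsilon}{|E|}\overline b_v$), so $d'_{v,e} \leq d_{v,e}$. Hence
\[
d'_v(\delta_{\solI-M_v}(v)) \;\leq\; d_v(\delta_{\solI-M_v}(v)) \;=\; d_v(\delta_\solI(v)) - d_v(\delta_{M_v}(v)) \;\leq\; b_v - d_v(\delta_{M_v}(v)) \;=\; \overline b_v,
\]
using feasibility of $\solI$ once more. This shows $(\solI,\{M_v\})$ is an $\epsilon$-relaxed solution, so the maximum-value $\epsilon$-relaxed solution has value at least $p(\solI) \geq (1-2\epsilon)\cdot\OPT$.

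I expect no real obstacle here: all the genuine work is already done in the Sparsification Lemma, and this lemma is essentially just a repackaging of it into the ``relaxed solution'' vocabulary. The only points requiring care are bookkeeping ones — that the rounding of small-edge demands is taken with respect to the \emph{residual} capacity $\overline b_v$ rather than $b_v$, and that it is performed independently at each endpoint of an edge — neither of which interferes with the monotonicity argument $d' \leq d$ used above.
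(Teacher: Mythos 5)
Your proof is correct and takes exactly the same approach as the paper, which simply asserts that the set $\solI$ and the sets $M_v$ from Lemma~\ref{lem:sparse} suffice. You have merely spelled out the routine verification of the three conditions in the definition of an $\epsilon$-relaxed solution that the paper leaves implicit.
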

\begin{proof}
The set $\solI$ and its corresponding $M_v$ subsets from Lemma \ref{lem:sparse} suffice.
\end{proof}

\begin{lemma}\label{lem:prune_excl}
Given any $\epsilon$-relaxed solution $\solI \subseteq E$,
we can efficiently find some $\sol \subseteq \solI$ that is a feasible \gdms solution with $p(\sol) \geq (1-O(\epsilon^{1/3})) \cdot p(\solI)$.
\end{lemma}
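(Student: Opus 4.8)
The plan is to pass back from the discretized demands $d'$ to the true demands $d$ by deleting only a small-profit set of edges from $\solI$. First I would bound how badly $\solI$ violates the true capacities. Since each small edge $e\in\delta_{\solI-M_v}(v)$ has $0\le d_{v,e}-d'_{v,e}<\frac{\epsilon}{|E|}\overline{b}_v$ and there are at most $|E|$ such edges, Discretized Small Edge Feasibility gives $d_v(\delta_{\solI-M_v}(v))<(1+\epsilon)\overline{b}_v$ and hence $d_v(\delta_\solI(v))=d_v(\delta_{M_v}(v))+d_v(\delta_{\solI-M_v}(v))<b_v+\epsilon\overline{b}_v$. So $\solI$ overshoots each $b_v$ by a \emph{deficit} $\Delta_v:=\max\{0,\,d_v(\delta_\solI(v))-b_v\}<\epsilon\overline{b}_v$, and — the useful point — whenever $\Delta_v>0$ this overshoot is carried entirely by the small edges at $v$, since $\Delta_v=d_v(\delta_{\solI-M_v}(v))-\overline{b}_v$ and every such edge has $d_{v,e}\le\epsilon\overline{b}_v$ by the Small Edges property. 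Thus to make $v$ feasible it suffices to delete small edges of total $d_v$-demand at least $\Delta_v$: the surviving small edges then have $d_v$-demand at most $\overline{b}_v$, which together with $\delta_{M_v}(v)$ fits within $b_v$.

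To keep the deleted profit small I would do these deletions randomly and then patch up. For each $v$ and each $e\in\delta_{\solI-M_v}(v)$, independently ``mark $e$ at $v$'' with probability $\delta:=\epsilon^{1/3}$, and delete every edge marked at one of its endpoints; call the result $\sol$. Since each edge is small at at most two endpoints, ${\bf Pr}[e\text{ deleted}]\le 2\delta$, so the expected deleted profit is at most $2\epsilon^{1/3}\cdot p(\solI)$. For a vertex $v$ with $\Delta_v>0$ let $R_v:=\sum_{e\in\delta_{\solI-M_v}(v)}d_{v,e}\cdot\mathbf 1[\text{$e$ marked at $v$}]$ be the $d_v$-mass deleted on behalf of $v$. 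Then ${\bf E}[R_v]=\delta\,d_v(\delta_{\solI-M_v}(v))\ge\delta\overline{b}_v$, which dwarfs $\Delta_v<\epsilon\overline{b}_v$, while ${\bf Var}[R_v]\le\delta\cdot(\epsilon\overline{b}_v)\cdot d_v(\delta_{\solI-M_v}(v))<2\delta\epsilon\overline{b}_v^2$ because every marked edge is $\epsilon\overline{b}_v$-small; Chebyshev's inequality then gives ${\bf Pr}[R_v<\Delta_v]=O(\epsilon^{2/3})$ (for $\epsilon$ below a small absolute constant). At every vertex with $R_v\ge\Delta_v$ the random deletion already restores feasibility; at any vertex still infeasible I would greedily delete further small edges until it is feasible, which is always possible since deleting all of $\delta_{\solI-M_v}(v)$ leaves at most the edges $\delta_{M_v}(v)$, whose $d_v$-demand is at most $b_v$. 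The extra profit lost in this patch-up is at most $\sum_v {\bf Pr}[v\text{ fails}]\cdot p(\delta_{\solI-M_v}(v))=O(\epsilon^{2/3})\cdot\sum_v p(\delta_{\solI-M_v}(v))=O(\epsilon^{2/3})\cdot p(\solI)$, again using that each edge is small at $\le 2$ endpoints. Altogether ${\bf E}[p(\sol)]\ge(1-O(\epsilon^{1/3}))\,p(\solI)$, and since only first and second moments are used, the argument derandomizes with a pairwise-independent family for the marks and taking the best outcome, exactly as in the remark after Lemma~\ref{lem:dmm_small}.

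The main obstacle, and the reason the proof of Lemma~\ref{lem:dmm_small} does not apply off the shelf, is that the load $d_v(\delta_\solI(v))$ does \emph{not} concentrate under any independent pruning: a vertex may carry up to $1/\epsilon^2$ ``large'' edges of $M_v$ whose individual demands are as large as $b_v$, making the variance of the load $\Theta(b_v^2)$. The device above sidesteps this by never reasoning about the load at $v$, only about $R_v$, the amount of \emph{small}-edge demand deleted at $v$, which is a sum of $\epsilon\overline{b}_v$-small terms and therefore concentrates, and by only needing $R_v$ to exceed the \emph{deterministic}, tiny threshold $\Delta_v$. A secondary point to handle carefully is that an edge may be large at one endpoint and small at the other, so a deletion requested on behalf of its small endpoint removes a large edge elsewhere; this is harmless because deletions only ever help feasibility, and the profit accounting already charges each edge to its (at most two) small endpoints.
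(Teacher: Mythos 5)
Your proof is correct, but it takes a different (though closely related) route from the paper's. The paper reduces to Theorem~\ref{thm:main}: it defines modified demands $\widehat{d}_{v,e}$ equal to $d_{v,e}$ when $e\notin M_v$ and $0$ when $e\in M_v$, with capacities $\overline b_v$, observes that every edge is then $\epsilon$-small and $\widehat d_v(\delta_\solI(v))\le(1+\epsilon)\overline b_v$, so the scaled indicator vector $x_e=\frac{1}{1+\epsilon}$ is feasible for \eqref{lp-m} with the trivial matroid, and then applies the last bullet of Theorem~\ref{thm:main} as a black box. Your argument instead does the random pruning directly: you bound the deterministic overshoot $\Delta_v<\epsilon\overline b_v$ carried by small edges, delete each small edge independently at each small endpoint with probability $\epsilon^{1/3}$, use Chebyshev on the deleted small-edge mass $R_v$ (which concentrates because its terms are all $\le\epsilon\overline b_v$) to show $R_v\ge\Delta_v$ except with probability $O(\epsilon^{2/3})$, and patch up any failing vertices greedily. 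The two approaches are morally the same — your $R_v$ plays the role of the residual load after pruning in the proof of Lemma~\ref{lem:dmm_small}, and the device of zeroing out the $M_v$ demands at $v$ is implicit in your restriction to $\delta_{\solI-M_v}(v)$ — and in fact the paper explicitly notes after the proof that ``we could avoid solving an LP and simply prune $\solI$ using a similar approach as in the proof of Lemma~\ref{lem:dmm_small},'' which is what you have done. Your version is self-contained and avoids solving an LP (a genuine simplification for the algorithm's running time and implementability); the paper's version is shorter because it reuses machinery already proved. Your concluding remark about why Lemma~\ref{lem:dmm_small} does not apply off the shelf — the unbounded variance contributed by the large $M_v$ edges — correctly identifies the obstruction that both proofs circumvent.

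One small nit: when you write ${\bf Pr}[e\text{ deleted}]\le 2\delta$ and $\sum_v p(\delta_{\solI-M_v}(v))\le 2p(\solI)$, you are implicitly using that every edge $e=uv$ has $e\notin M_u$ or $e\notin M_v$ for at most two choices of endpoint; that is trivially true (an edge has two endpoints), but you should say it that way rather than ``each edge is small at at most two endpoints,'' since an edge can certainly be small at both endpoints — the point is just that it has only two. This does not affect the bound.
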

The idea is that the $\{0,1\}$ indicator vector of $\solI$
is almost a feasible solution to \eqref{lp-m} with the trivial matroid $\mathcal I = 2^E$ in the residual instance after all ``large'' edges are packed so it can be pruned to a feasible solution while losing very little value
by appealing to the last bound in Theorem \ref{thm:main}. There is a minor subtlety in how to deal with edges that are both ``small'' and ``large''.
\begin{proof}
Let $\overline b_v = b_v - d_v(\delta_{M_v}(v))$.
Consider the following modified instance of \gdms. The graph is $G = (V,\solI)$, each $v \in V$ has capacity $\overline b_v$, and the demands are
$$\widehat{d}_{v,e} = \left\{
\begin{array}{rl}
d_{v,e} & e \not\in M_v \\
0 & e \in M_v.
\end{array}
\right.$$
Note some edges may have one of their endpoint's demands set to 0 while the other is unchanged. For each $v \in V$.
$$\widehat{d}_v(\delta_{\solI}(v)) = d_v(\delta_{\solI-M_v}(v)) \leq d'_v(\delta_{\solI-M_v}(v)) + |\delta_{\solI-M_v}(v)| \cdot \frac{\epsilon}{|E|} \cdot \overline b_v \leq (1+\epsilon) \cdot \overline b_v.$$
Therefore, setting $x_e = \frac{1}{1+\epsilon}$ yields a feasible solution for \eqref{lp-m} (with the trivial matroid in which all subsets are independent)
with value $\frac{p(\solI)}{1+\epsilon}$.
By Theorem \ref{thm:main}, we can efficiently find a feasible \gdms solution $\sol$ such that
$$p(\sol) \geq  (1-O(\epsilon^{1/3}) \frac{p(\solI)}{(1+\epsilon)} \geq (1-O(\epsilon^{1/3}) p(\solI).$$
\end{proof}
Alternatively, we could avoid solving an LP and simply prune $\solI$ using a similar approach as in the proof of Lemma \ref{lem:dmm_small}.


\subsection{A Dynamic Programming Algorithm}
Suppose $G = (V,E)$ has treewidth at most $\tau$ and that we are given a tree decomposition $\mathcal T = (\mathcal B, E_{\mathcal T})$ of $G$ where each $B \in \mathcal B$ has $|B| \leq \tau+1$.
Recall this means the following:
\begin{enumerate}
\item For each $v \in V$, the set of bags $\mathcal B_v = \{B \in \mathcal B : v \in B\}$ form a connected subtree of $\mathcal T$.
\item For each $uv \in E$, there is at least one bag $B \in \mathcal B$ with $u,v \in B$.
\end{enumerate}
Let $B^r \in \mathcal B$
be some arbitrarily chosen {\em root} bag. View $\mathcal T$ as being rooted at $B^r$. We may assume that each $B \in \mathcal B$ has at most two children. In fact, it simplifies our recurrence a bit to assume
each $B \in \mathcal B$ is either a leaf in $\mathcal T$ or has precisely two children. This is without loss of generality.
Arbitrarily order the children of a non-leaf vertex so one is the {\em left} child and one is the {\em right} child.
For a bag $B$, let $\mathcal T_B$ be the subtree of $\mathcal T$ rooted at $B$ (so $\mathcal T_{B^r} = \mathcal T$).

For each $v \in V$, say $\overline{B}_v$ is the bag containing $v$ that is closest to the root $B^r$.
Note for $uv \in E$ with $\overline{B}_u \neq \overline{B}_v$ that one of $\overline B_u$ or $\overline B_v$ lies on the path between the other and $B^r$ (by the properties of tree decompositions).
For each $B \in \mathcal B$ and each $v \in B$, we partition a subset of the edges of $\delta(v)$ into four groups:
\begin{itemize}
\item $\delta^{\there}(v:B) = \{uv \in \delta(v) : \overline B_u = B\}$.
\item $\delta^{\tleft}(v:B) = \{uv \in \delta(v) : \overline B_u \text{ lies in the left subtree of } B \}$.
\item $\delta^{\tright}(v:B) = \{uv \in \delta(v) : \overline B_u \text{ lies in the right subtree of } B \}$.
\item $\delta^{\tup}(v:B) = \{uv \in \delta(v) : \overline B_u \text{ lies between } B \text{ and } B^r \}$.
\end{itemize}
The only other edges $uv \in \delta(v)$ not accounted for here do not have $\overline B_u$ in either $\mathcal T_B$ or between $B$ and $B^r$.
We note if $B = \overline B_v$, then every edge in $\delta(v)$ lies in one of the four groups and for any $uv \in \delta^{\tup}(v:B)$ we must have $u \in B$ (otherwise no bag contains $u$ and $v$, which is impossible since $uv \in E$)
and, consequently, $\overline B_u$ lies between $B$ and $B^r$. This will be helpful to remember when we describe the recurrence.

~

\noindent
{\bf Dynamic Programming States}\\
Let $\Delta := \{\there, \tleft, \tright, \tup\}$ be the set of ``directions'' used above.
The DP states are given by tuples $\Phi$ with the following components.
\begin{itemize}
\item A bag $B \in \mathcal B$.
\item For each $v \in B$, a subset $M_v \subseteq \delta(v)$ with $|M_v| \leq 1/\epsilon^2$.
\item For each $v \in B$ and $\kappa \in \Delta$, an integer $a_{v,\kappa} \in \{0, \ldots, |E|/\epsilon\}$ such that $\sum_{\kappa \in \Delta} a_{v, \kappa} \leq \frac{|E|}{\epsilon}$.
\end{itemize}
The number of such tuples is at most $|\mathcal B| \cdot |E|^{O(\tau/\epsilon^2)}\cdot (|E|/\epsilon)^{O(\tau)}$, which is polynomial in $G$ when $\tau$ and $\epsilon$ are regarded as constants.
The idea behind $a_{v,\kappa}$ is that it describes how to reserve the discretized $d'_v$-demand for edges $uv \in \delta^\kappa(v:B)- M_v$.
Of course, other edges in $\delta(v)$ not in the partitions
$\delta^\kappa(v:B)$ may be in an optimal $\epsilon$-relaxed solution. They will either be explicitly guessed in $M_v$ or will be considered in a state higher up the tree by the time the bag $\overline B_v$ is processed.

~

\noindent
{\bf Dynamic Programming Values}\\
For each such tuple $\Phi = \left( B; \langle M_v\rangle_{v \in B}; \langle a_{v,\kappa}\rangle_{v \in B, \kappa \in \Delta}\right)$, we let $f(\Phi)$
denote the maximum total value of an $\epsilon$-relaxed solution $\sol \subseteq E$ (with corresponding large sets $M'_v$ for $v \in V$) satisfying the following properties.
We slightly abuse notation and say $v \in \mathcal T_B$ for some $v \in V$ if $v$ lies in some bag of the subtree $\mathcal T_B$.
\begin{itemize}
\item Each $uv \in \sol$ has at least one endpoint in $\mathcal T_B$.
\item $M'_v = M_v$ for each $v \in B$.
\item Each $uv \in \sol$ with both $\overline B_u, \overline B_v \not\in \mathcal T_B$ lies in $M'_u \cup M'_v$.
\item For $v \in B$ let $\overline{b_v} = b_v - d_v(\delta_{M'_v}(v))$.
For $\kappa \in \Delta$ and $v \in B$, it must be that $d'_v(\delta^{\kappa}(v:B) \cap \sol - M'_v) \leq  a_{v,\kappa} \cdot \frac{\epsilon}{|E|} \cdot \overline{b_v}$
where $d'_{v,e}$ is the largest integer multiple of $\frac{\epsilon}{|E|} \cdot \overline{b_v}$ that is at most $d_{v,e}$ for $e \in \delta_{\sol-M'_v}(v)$.
\end{itemize}
The last point is a bit technical. Intuitively, it says the scaled demand of small edges incident to $v$ coming from some direction $\kappa \in \Delta$ fit in the capacity of $v$ reserved for that direction.


%
If there is no such $F$, we say $f(\Phi) = -\infty$. Note the maximum of $f(\Phi)$ over all configurations $\Phi$ for the root bag $B_r$ is the maximum value over all $\epsilon$-relaxed
solutions.

\subsubsection{The Recurrence: Overview}
We start by outlining the main ideas. A tuple $\Phi$ is a base case if the bag $B$ is a leaf
of $\mathcal T$. In this case, only edges in some $\delta^{\kappa}(v:B)$ set for $\kappa \in \{\there, \tup\}$ are considered (there are none in the directions \tleft, \tright). We find the optimal
way to pack such edges that are not part of a ``large'' set $M_v$ while ensuring the $d'_v$-demands do not violate the residual capacities $\overline{b_v}$ and, in particular, for each direction
$\kappa$ we ensure this packing does not violate the part of the residual capacity for that direction allocated by the $a_{v, \kappa}$ values. This subproblem is just the \textsc{Multi-Dimensional Knapsack}
problem with $2|B|$ knapsacks. A standard pseudopolynomial-time algorithm can be used to solve it as the scaled demands are from a polynomial-size discrete range.

For the recursive step, we try all pairs of configurations $\Phi^\tleft, \Phi^\tright$ that are ``consistent'' with $\Phi$. Really this just means they agree on the sets $M_v$ for shared vertices $v$ and
they agree on how much demand $a_{v,\kappa}$ should be allocated for each direction. For each such consistent pair, we pack small edges in $\delta^\there(v:B)$ and $\delta^\tup(v:B)$
optimally such that their scaled demands do not violate the $a_{v,\kappa}$-capacities, again using \textsc{Multi-Dimensional Knapsack}.

\subsubsection{The Recurrence: Details}
Fix a tuple $\Phi = \left( B; \langle M_v\rangle_{v \in B}; \langle a_{v,\kappa}\rangle_{v \in B, \kappa \in \Delta}\right)$. We describe how to compute $f(\Phi)$ recursively.
In the recursive step, all subproblems invoked will involve only children of $B$ and base cases are leaves of $\mathcal T$. So we can use dynamic programming to compute
$f(\Phi)$ in polynomial time; it will be evident that evaluating the cases in terms of subproblems take polynomial time.

For brevity, let $M_{\texttt{big}} = \cup_{v \in B} M_v$.
As with the discussion above, for this tuple we let $\overline{b_v} = b_v - d_v(\delta_{M_v}(v))$.
Let $E_\Phi$ be all edges $e$ such that:
\begin{itemize}
\item $e \in \delta^{\kappa}(v:B)-M_{\texttt{big}}$ for some endpoint $v$ of $e$ lying in $B$ and some $\kappa \in \{\tup, \there\}$,
\item for any such endpoint $v \in B$ and associated $\kappa \in \{\tup, \there\}$ we have $d_{v,e} \leq \epsilon \overline{b_v}$, and
\item at least one endpoint $v$ has $\overline{B}_v = B$.
\end{itemize}

Some of the $M_v$ edges in the configuration may be small on the other endpoint which also lying in $B$. With this in mind, for $v \in B$ and $\kappa \in \{\tup, \there\}$
we let $D'^\kappa_v = d'_v(\delta^\kappa(v:B) \cap M_\texttt{big} - M_v)$ be the scaled demand (from the appropriate direction) on $v$ from edges guessed explicitly by $\Phi$ yet are small on $v$.
Think of this as the small demand across $v$ that we are required to pack due to the guesses for large edges, the rest of the calculation for $f(\Phi)$ will be to optimally pack the edges
in $E_{\Phi}$ into the remaining allocated capacities.

In both the base case and recursive step, we require the following to hold or else we set $f(\Phi) = \infty$.
First, $M_{\texttt{big}}$ is feasible by itself (i.e. $\overline b_v \geq 0$ for each $v \in B$).
Next, $D'^\kappa_v \leq a_{v,\kappa} \cdot \frac{\epsilon}{|E|} \cdot \overline b_v$ for each $v \in B, \kappa \in \{\tup, \there\}$.
This means the edges already guessed in $\Phi$ that are small on $v \in B$ have their scaled demands fit in the $d'_v$-values.

~

\noindent
{\bf Base Case}\\
Suppose $B$ is a leaf of $\mathcal T$.
In this case, no demand comes from edges contributing to $a_{v, \tleft}$ or $a_{v, \tright}$
and we can use dynamic programming to find the maximum-value set of edges of $E_\Phi$ to pack in the capacities $a_{v,\kappa}$ for $v \in B, \kappa \in \{\there, \tup\}$.



That is, we find a maximum-profit $F \subseteq E_\Phi$ such that for each $v \in B, \kappa \in \{\there, \tup\}$,
$$d'_v(\delta^\kappa(v:B) \cap F) \leq a_{v,\kappa}\cdot \frac{\epsilon}{|E|} \cdot \overline{b_v} - D'^{\there}_v.$$
This can be done using a standard pseudo-polynomial time dynamic programming algorithm for \textsc{Multiple-Dimensional Knapsack} with $2|B|$ knapsacks (one for the $\tup$ entry and one for the $\there$ entry of each vertex).
Note each entry of the table is indexed by an integer multiple of $\frac{\epsilon}{|E|} \cdot \overline b_v$, so this runs in polynomial time.

~

\noindent
{\bf Recursive Step}\\
The idea behind computing $f(\Phi)$ when $\Phi$ is not a base case is to try all pairs of configurations $\Phi^\tleft, \Phi^\tright$ for the children of the bag $B$ for $\Phi$ that agree with
$\Phi$ on the ``large'' edges $B_v$ for shared vertices $v$ and on how the $a_{v,\kappa}$ scaled capacity allocations for $v$ are distributed. For any such pair, we use a similar dynamic programming
routine as in the base case to pack in the maximum value of ``small'' edges that contribute to $a_{v,\kappa}$ for $\kappa \in \{\there, \tup\}$ and $v \in B$. We now make this precise.

Let $B$ be a non-root bag with children $B^{\tleft}$ and $B^{\tright}$. Say tuples $\Phi^\tleft$ for $B^\tleft$ and $\Phi^\tright$ for $B^\tright$ are {\em compatible} with $\Phi$ if the following hold.
We use notation like $B^\tleft$ to indicate the bag-component of $\Phi^\tleft$, $a^\tright_{v,\kappa}$ for the $a_{v,\kappa}$-component of $\Phi^\tright$, etc.
\begin{itemize}
\item The sets $M_v, M^\tleft_v, M^\tright_v$ are all the same (when they exist) for $v \in B$.
\item Let $a^\tleft_v = \sum_{\kappa \in \Delta-\{\tup\}} a^\tleft_{v,\kappa}$ be the total $d'$-demand of small edges contributing to $v$'s load whose other endpoint $u$ has $\overline{B}_u$ in $\mathcal T_{B^\tleft}$.
Similarly define $a^\tright_v$ for the right child of $B$. Then for each $v \in B$ it must be $a^\tleft_v = a_{v, \tleft}$ (if $v \in B^\tleft$) and $a^\tright_v = a_{v, \tright}$ (if $v \in B^\tright$).
In other words, $\Phi$ agrees with $\Phi^{\tleft}$ and $\Phi^{\tright}$ about how much $d'$-demand is used $\mathcal T_{B^\tleft}$ and $\mathcal T_{B^\tright}$ for each $v \in B$.
\item For each $v \in B$, if $v \in B^\tleft$ then $a^\tleft_{v, \tup} = a_{v, \there} + a_{v, \tup}$ and if $v \in B^\tright$ then $a^\tright_{v, \tup} = a_{v, \there} + a_{v, \tup}$.
\end{itemize}

Let $g(\Phi)$ be the maximum value of a subset of $E_\Phi$ such that the remaining $a_{v,\kappa}$-capacity of $v \in B$ for each $\kappa \in \{\there, \tup\}$ are not overpacked by the $d'$-values of these edges.
This is essentially identical to the calculation in the base case using \textsc{Multi-Dimensional Knapsack}.

Finally we compute
$$
f(\Phi) = g(\Phi) + \displaystyle \max_{\substack{\Phi^\tleft, \Phi^\tright\\ \text{ compatible with } \Phi}} \bigg[f(\Phi^\tleft) + f(\Phi^\tright) + p(\cup_{v \in B-(B^\tleft\cup B^\tright)} M_v) - p(\cup_{v \in B^\tleft \cap B^\tright} M_v)\bigg].
$$
This adds the value obtained from the two subproblems, subtracts out the ``double-counted part'' which is exactly the set of ``big'' edges from both subproblems, adds the new big edges for $\Phi$,
and also the new small edges that were packed by the inner DP algorithm.

We feel one comment is in order to see why no other edges are double counted. Consider an edge $uv$ where $uv$ contributes to both solutions of some pair $\Phi^\tleft, \Phi^\tright$.
If some subtree, say $\mathcal T_{B^\tleft}$, does not contain either $\overline B_u$ or $\overline B_v$ then $uv \in M_u \cup M_v$ by definition of $f(\Phi)$ and such edges were subtracted in the expression 
above to avoid double counting.
If some subtree contains both $\overline B_u$ and $\overline B_v$ then $uv$ could not have contributed to the subproblem. So some subtree contains $\overline B_u$ and the other contains $\overline B_v$.
But then no bag contains both $u$ and $v$, which is impossible.

\bibliographystyle{plain}

\bibliography{dm-ref}


\appendix

\end{document}